\newtheorem{theorem}{Theorem}
\newcommand{\com}[1]{\textbf{\color{red} (COMMENT: #1)}} %comment of the text
\newcommand{\comg}[1]{\textbf{\color{green} (COMMENT: #1)}}
\newcommand{\response}[1]{\textbf{\color{magenta} (RESPONSE: #1)}} %response to comment
\newcommand{\com}[1]{}
\newcommand{\comg}[1]{}
\newcommand{\response}[1]{}
\begin{document}

% \title{Cost-Effective GNN Processing over Distributed Edges: A Graph Layout Optimization Framework}
\title{GNN at the Edge: Cost-Efficient Graph Neural Network Processing over Distributed Edge Servers}

% \author{IEEE Publication Technology,~\IEEEmembership{Staff,~IEEE,}
        % <-this % stops a space
\author{Liekang Zeng,
Chongyu Yang, 
Peng Huang,
Zhi Zhou,
Shuai Yu,
and Xu Chen
\thanks{The authors are with the School of Computer Science and Engineering, Sun Yat-sen University, Guangzhou 510006, China (e-mail: \{zenglk3, yangchy37, huangp57\}@mail2.sysu.edu.cn, \{zhouzhi9, yushuai, chenxu35\}@mail.sysu.edu.cn).}% <-this % stops a space
% \thanks{Manuscript received xxxx, xxxx; revised xxxx, xxxx.}
}

% The paper headers
% \markboth{Journal of \LaTeX\ Class Files,~Vol.~14, No.~8, August~2021}%
% {Shell \MakeLowercase{\textit{et al.}}: A Sample Article Using IEEEtran.cls for IEEE Journals}

% \IEEEpubid{0000--0000/00\$00.00~\copyright~2021 IEEE}
% Remember, if you use this you must call \IEEEpubidadjcol in the second
% column for its text to clear the IEEEpubid mark.

\maketitle

\begin{abstract}
Edge intelligence has arisen as a promising computing paradigm for supporting miscellaneous smart applications that rely on machine learning techniques.
While the community has extensively investigated multi-tier edge deployment for traditional deep learning models (\textit{e.g.} CNNs, RNNs), the emerging Graph Neural Networks (GNNs) are still under exploration, presenting a stark disparity to its broad edge adoptions such as traffic flow forecasting and location-based social recommendation.
To bridge this gap, this paper formally studies the cost optimization for distributed GNN processing over a multi-tier heterogeneous edge network.
We build a comprehensive modeling framework that can capture a variety of different cost factors, based on which we formulate a cost-efficient graph layout optimization problem that is proved to be NP-hard.
Instead of trivially applying traditional data placement wisdom, we theoretically reveal the structural property of quadratic submodularity implicated in GNN's unique computing pattern, which motivates our design of an efficient iterative solution exploiting graph cuts.
Rigorous analysis shows that it provides parameterized constant approximation ratio, guaranteed convergence, and exact feasibility.
To tackle potential graph topological evolution in GNN processing, we further devise an incremental update strategy and an adaptive scheduling algorithm for lightweight dynamic layout optimization.
Evaluations with real-world datasets and various GNN benchmarks demonstrate that our approach achieves superior performance over \textit{de facto} baselines with more than 95.8\% cost reduction in a fast convergence speed.
\end{abstract}

\begin{IEEEkeywords}
Edge intelligence, Graph Neural Networks, cost optimization, distributed edge computing.
\end{IEEEkeywords}

\section{Introduction}

\IEEEPARstart{R}{ecent} advances in neural networks have emerged as a powerful technique in a broad range of smart applications, including object detection, speech recognition, and data mining \cite{lecun2015deep, pouyanfar2018survey, zhou2019edge}.
With the rapidly increasing computation resources as supporting platforms and tremendous training data as materials, Deep Neural Networks (DNNs) are able to extract latent representations from existing statistics, especially data in Euclidean or sequential form \cite{lecun2015deep}.
Yet there are still enormous data generated with complex graph structures including social graph \cite{zhong2020hybrid}, wireless sensor network \cite{marche2020exploit}, \textit{etc.}
Analytics on these graphs bring significant challenges to traditional DNNs due to their structure irregularity and feature complexity \cite{yan2020hygcn}.

To cope with representation learning in the graph domain, Graph Neural Networks (GNNs) have arisen and are attracting growing attention from both research \cite{kipf2016semi, velivckovic2017graph, hamilton2017inductive} and industry communities \cite{zhou2018graph, zhu2019aligraph, ma2019neugraph}.
GNNs integrate graph embedding techniques with convolution to collectively aggregate information from nodes and their dependencies, enabling capturing hierarchical patterns from subgraphs of variable sizes.
Benefited from such advanced ability in modeling graph
structures, GNNs have been recently employed in miscellaneous graph-based analytics in edge computing scenarios.
For example, scoring Point-of-Interest (PoI) for location-based recommendation applications \cite{zhong2020hybrid, chang2020learning}, predicting power outages over power grids \cite{owerko2020optimal, ringsquandl2021power}, forecasting network traffic based on the highway sensors \cite{yu2017spatio, wang2020traffic}, and optimizing resource management in wireless IoT networks \cite{chen2021gnn, he2021overview}.

\begin{figure}[t]
  \centering
  \includegraphics[width=0.9\linewidth]{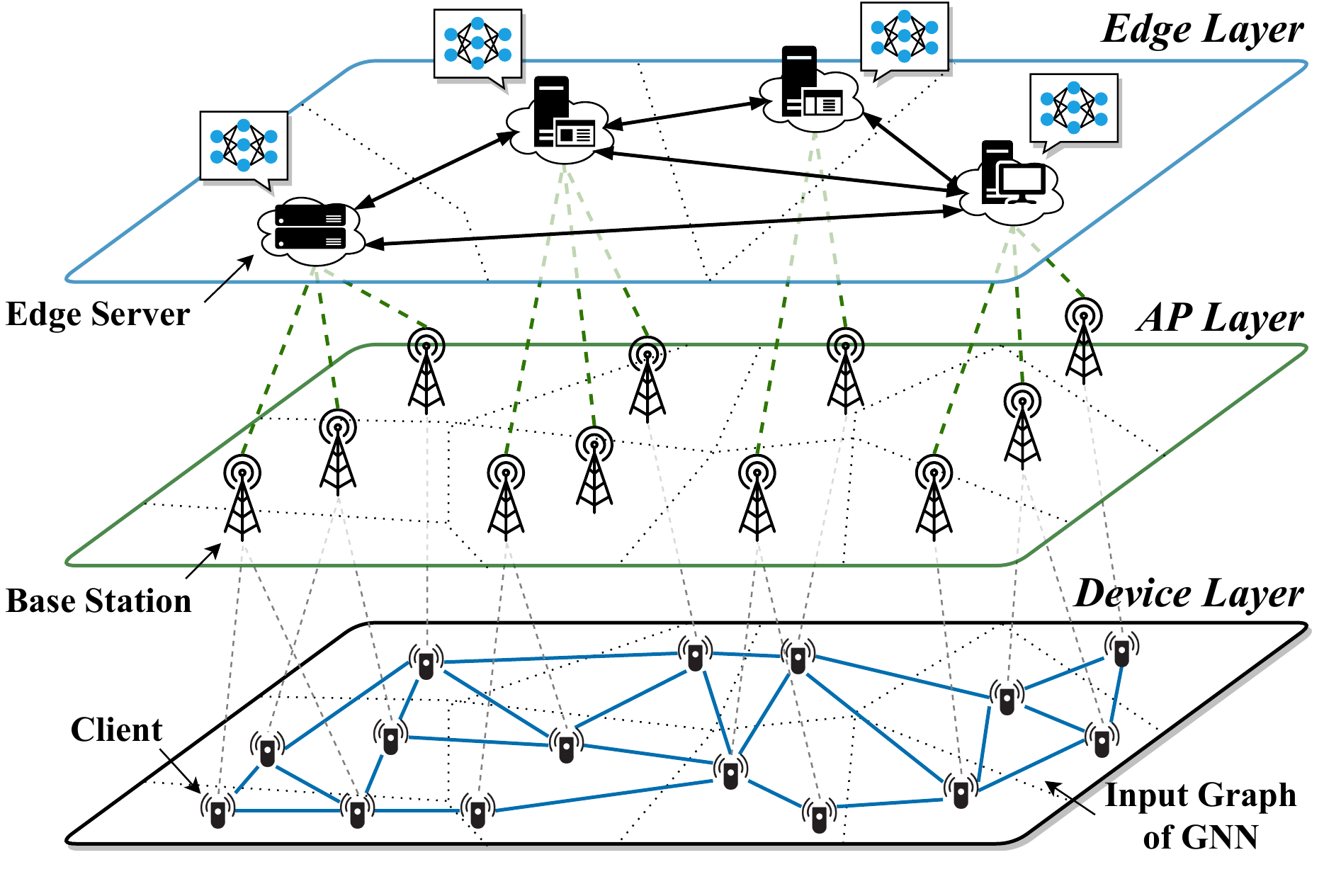}
  \caption{A multi-tier edge network architecture comprising device layer, access point (AP) layer, and edge layer.
  The clients with connections form a data graph (\textit{e.g.}, social web), while each node in the edge network covers a range of users. 
  The edge servers perform distributed GNN processing by parallelizing model execution and exchanging data mutually, in order to infer targeted graph properties (\textit{e.g.}, predicting potential social relationships).
  }
  \label{fig:scenario}
\end{figure}

While traditional Deep Learning (DL) models (\textit{e.g.} CNNs, RNNs) accept inputs from each individual client independently, processing GNNs at the network edge can span over distributed edge servers geographically, due to the dispersed nature of its inputs (graph data).
Fig. \ref{fig:scenario} illustrates how this works, where device clients with connections (\textit{e.g.} social relationship) form a data graph, and edge servers organize an edge network with each one associates a pool of access points (\textit{e.g.} 5G base stations, IoT gateways) for data collection.
When a GNN inference query is raised, each edge server first aggregates a subset of the graph data via APs in certain areas, and next launches a distributed runtime to compute the embeddings through the given GNN model.
During the runtime, the edge servers are orchestrated in a collaborative manner, where they are committed to exchanging necessary graph data with each other, parallelizing GNN execution individually, and repeating this routine iteratively.

Unfortunately, efficient distributed GNN processing is much less understood in the context of edge intelligence.
Existing efforts on edge-enabled distributed intelligence have relied on either offloading computation with external infrastructures \cite{kang2017neurosurgeon, teerapittayanon2017distributed, li2019edge} or partitioning model execution across edge servers \cite{mao2017modnn, hadidi2018distributed, zeng2020coedge}.
Both of them, however, implicitly assume single-point input and an independent serving style between one client and one edge server, which fits CNNs and RNNs but is inapplicable for GNNs.
Furthermore, while some works \cite{ma2019neugraph, yang2019aligraph, zhang2020agl} have considered parallel implementations of GNN, little has been done to investigate how graph layout impacts system performance, neither considering distinct characteristics of edge computing like data uploading and resources heterogeneity \cite{shi2016edge}.
Overall, the absence of distributed GNN processing optimization presents a stark disparity to its increasingly popular applications in edge scenarios.

To bridge the gap, in this paper, we formally study the system cost optimization problem for Distributed GNN Processing over heterogeneous Edge servers (DGPE) by building a novel modeling framework that generalizes to a wide variety of cost factors.
In particular, we formulate the hierarchical architecture of multi-tier edge networks, and characterize the system costs in terms of the life cycle of a DGPE task: data collection, GNN computation, cross-edge data exchanging, and edge server maintenance.
While we strive to coordinate a graph layout for dispatching the GNN's input data to proper edge servers, optimizing the system cost is non-trivial given the intrinsic nexus of the objectives:
1) Edge servers can be heterogeneous regarding the cost of GNN computation and edge server maintenance, for which achieving the best economic outcome will push clients to offload their data to servers with lower costs.
2) As the unique computing pattern of GNN requires frequent data exchange between edge servers for DGPE, to reduce the cost of runtime give priority to placing clients' data to edges in closer proximity.
3) While the input data graph could evolve over time, the system performance may fluctuate and reach sub-optimum even with an optimal static placement decision, which calls for efficient adaptive dynamic optimization.
We highlight that the second challenge is unique to DGPE as opposed to other DL models, and all these intrinsically intertwined yet potentially conflicting goals further complicate the problem.

To address these challenges, we propose GLAD, a novel \textbf{G}raph \textbf{LA}yout sche\textbf{D}uling solution framework that can optimize the system cost for different GNN inference scenarios.
Concretely, we first consider the scenarios where the input graph of GNN is topologically static, such as traffic sensory networks in which the IoT sensors are fixed at dedicated locations.
In this setting, we observe that the cost objective function has properties of pseudo-boolean quadratic and submodularity, inherently implicated by the GNN's unique computing pattern, which motivates us to represent the system cost as a series of decoupled graphs.
Specifically, by constructing a mapping from graph cuts to placement decisions, the cost optimization problem is resolved and addressed by settling a series of minimum $s-t$ graph cuts.
Theoretical performance analysis is rigorously conducted regarding its feasibility, approximation ratio, convergence, and time complexity.
We next consider the scenarios with evolving input graphs, where the input data graph may change topologically over time (\textit{e.g.}, social network).
For this case, we devise an incremental update strategy upon the insights in the static counterpart, and further design an adaptive scheduling algorithm to strike a balance between operating overhead and benefits.
Extensive evaluations against real-world data traces and several GNN benchmarks demonstrate the superiority of GLAD over \textit{de facto} baselines, as well as its convergence, scalability, and adaptability in online settings.

The key contributions are summarized below:

\begin{itemize}
    \item We build a novel modeling framework for system cost optimization on distributed GNN processing over heterogeneous edge servers. Combining the unique characteristic of GNN computation and edge computing environment, a cost-efficient graph layout optimization problem is formulated, which is proved to be NP-hard.
    \item We theoretically reveal that the optimization problem's objective function is pseudo-boolean quadratic and submodular, based on which we propose an efficient algorithm leveraging graph-cuts techniques. We provide a rigorous analysis of its approximation ratio, convergence, and time complexity.
    \item We develop an incremental graph layout improvement strategy to address the potential dynamic evolution of GNN's input data graph, and further design an adaptive scheduling algorithm to well balance the tradeoff between graph layout update overhead and system performance.
    \item We carry out extensive evaluations using realistic datasets and various GNN benchmarks, demonstrating that our approach outperforms existing baselines by up to 95.8\% cost reduction, while converges fast, scales to a larger volume, and adapts to dynamics.
\end{itemize}

The rest of this paper is organized as follows.
Sec. \ref{sec:background} briefly reviews GNNs and their applications at the edge, as well as related work.
Sec. \ref{sec:system_model} introduces the system model for DGPE, with detailed formulation on cost factors.
Sec. \ref{sec:offline_algorithm} presents the motivation, design, and analysis of the algorithm for static input graphs, while Sec. \ref{sec:online_algorithm} shows the incremental graph layout optimization strategy and the adaptive scheduling algorithm for dynamic input graphs.
Sec. \ref{sec:evaluation} evaluates the proposed approach and Sec. \ref{sec:conclusion} concludes.
The appendix provides the proofs of Theorem \ref{thm:correctness}, \ref{thm:approximation}, \ref{thm:convergence}, and \ref{thm:time_complexity}.

\section{Background and Related Work}
\label{sec:background}

This section briefly introduces GNN regarding its workflow, example models, and edge applications.
Related works are next reviewed.

\subsection{Graph Neural Network}
\label{sec:gnn}

Graph Neural Network (GNN) is a kind of DNN extrapolated on the graph domain, which combines graph embedding techniques and neural network operators to learn latent representations for graph-structural data \cite{zhou2018graph, wu2019comprehensive}.
The input to a GNN model is an attributed graph, where the vertices and links form a relational topology and each vertex attaches with a feature vector describing its physical properties.
For instance, a vertex in a social graph is a user, whose feature vector may depict the user's age, location, and preference, \textit{etc.}, and a link between users can be a fellowship.
The output of a GNN model is called the \textit{representation vector} or the \textit{embedding}, which are multi-dimension vectors that encode the latent information of vertices.
For the social graph just mentioned, a user vertex's embedding could be a vector of floating-point numbers that enciphers the user's original features.
Since the embedding has translated the rich semantics of the raw data into a well-formed numerical representation, it can be used for miscellaneous downstream analytics such as user identity categorization \cite{zhang2020deep}, friendship prediction \cite{sankar2021graph}, and community detection \cite{chen2018supervised}.

\begin{figure}[t]
\centerline{\includegraphics[width=0.9\linewidth]{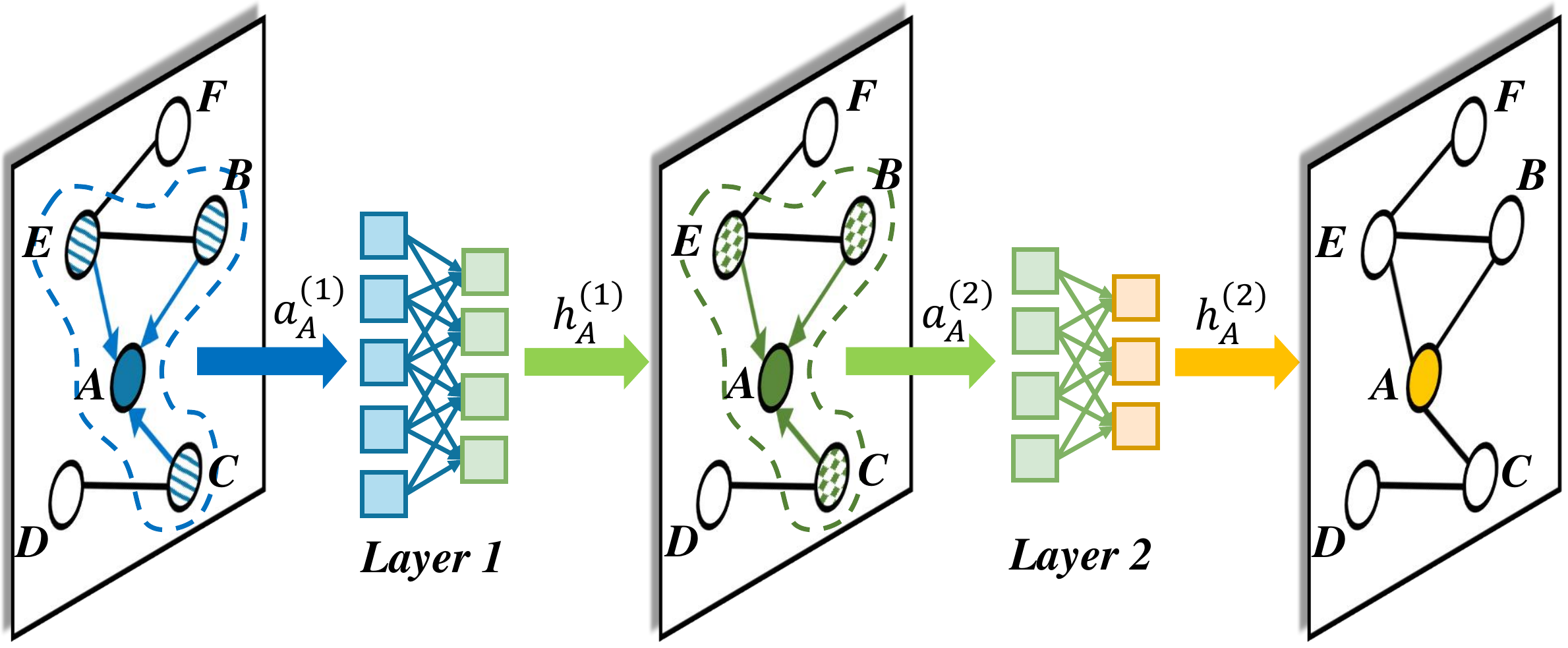}}
\caption{An instance of a two-layer Graph Neural Network architecture. To compute vertex \textit{A}'s embeddings, the model iteratively aggregates its neighbors, \textit{i.e.} \textit{B}, \textit{C}, and \textit{E}, and updates through neural activations.}
\label{fig:gnn}
\end{figure}

\textbf{Workflow.} 
Fig. \ref{fig:gnn} illustrates the inference workflow of a two-layer GNN architecture.
For the targeted vertex \textit{A}, the $k$-th GNN layer first \textit{aggregates} its neighbor vertices' feature vectors from \textit{B}, \textit{C}, and \textit{E}, and then \textit{updates} the aggregation $a^{(k)}_A$ through an activation to generate its representation vector $h^{(k)}_A$.
Every vertex in the graph follows the same procedure, \textit{i.e.} sharing the same aggregation and activation functions within the same layer, to generate the full graph embeddings.
By ingesting the previous layer's output into the successive layer as input, the layers are stacked to constitute the complete GNN model.

\textbf{Example models.}
We detail several typical GNN models to instantiate the aforementioned operations. 

\textit{GCN} \cite{kipf2016semi} is one of the first GNN models that draw convolutions on graph topology, and has been widely used in semi-supervised learning tasks.
Inference on its $k$-th layer is formalized as:
\begin{align}
    a^{(k)}_v = \sum_{u\in\mathcal{N}_v} h^{(k-1)}_u, \quad
    h^{(k)}_v = \sigma (W^{(k)} \cdot \frac{a^{(k)}_v+h^{(k-1)}_v}{|\mathcal{N}_v|+1}),
\end{align}
where $\mathcal{N}_v$ defines the set of $v$'s neighbor vertices, $W^{(k)}$ is the weights and $\sigma(\cdot)$ is an element-wise nonlinearity.

\textit{GAT} \cite{velivckovic2017graph} incorporates attention mechanism into graph convolutions to differentially weight the targeted vertex's neighbors. 
These vertex-varying weights are called attention parameters, denoted by $\eta^{(k)}_{vu}$, and are learned in model training.
For inference, the $k$-th layer consists of the following operations:
\begin{align}
    a^{(k)}_v = \sum_{u\in\mathcal{N}_v\cup \{v\}} \eta^{(k)}_{vu} W^{(k)} h^{(k-1)}_u, \quad 
    h^{(k)}_v = \sigma( a^{(k)}_v ). 
\end{align}

\textit{GraphSAGE} \cite{hamilton2017inductive} represents the inductive genre of GNN models.
It adopts a neighbor sampling technique to scale the model training on large graphs.
Its inference, however, fully collects the neighbor set and can be described as:
\begin{align}
    a^{(k)}_v = \frac{\sum_{u\in\mathcal{N}_v} h^{(k-1)}_u}{|\mathcal{N}_v|},
    h^{(k)}_v = \sigma( W^{(k)} \cdot (a^{(k)}_v, h^{(k-1)}_v) ). 
\end{align}
Note that GraphSAGE has variants with different aggregation functions. Here we select the mean aggregation version.

\textbf{Edge applications.}
Owing to their superior ability in abstracting graph data, GNNs have been employed in a wide range of real-world edge computing scenarios, especially for emerging IoT-driven applications.
For instance, traffic sensory networks can be naturally abstracted as a graph with roadside detectors as vertices and roads as links, where each vertex attaches a vector recording sensory data like traffic speed and occupancy.
Regarding such graphs, GNN models \cite{li2017diffusion, guo2019attention, wang2020traffic} are usually applied to perform traffic flow forecasting at a future time window.
Another example is location-based recommendation \cite{zhong2020hybrid, chang2020learning, yuan2020spatio} that employs GNN models to infer users' PoI, given their social connections and the graph of landmarks' spatial relations.
Since IoT devices are usually working nonstop, these GNN-based services are typically provisioned in a resident manner and process graph data streams continuously, which urgently emphasizes the necessity of a cost-effective solution for economic, efficient, and scalable GNN processing.

\subsection{Related Work}
\label{sec:related_work}

As an interdiscipline of GNN systems and edge intelligence, optimizing the cost of DGPE efficiently is still absent in the current context.
We now discuss the literature related to our problem in the following aspects.

\textbf{Distributed GNN systems.}
To support efficient GNN processing, a few frameworks have been developed in both research \cite{jia2020improving, tian2020pcgcn, thorpe2021dorylus, mohoney2021marius, cai2021dgcl, yan2020hygcn, zeng2022fograph} and industry communities \cite{hu2020featgraph, ma2019neugraph, zhu2019aligraph, zhang2020agl, gandhi2021p3, wang2021flexgraph} to optimize performance at different levels.
Towards general kernel-level optimizations, several works \cite{ma2019neugraph, wang2021flexgraph, wu2021seastar} endeavor to abstract a user-friendly programming model, and then apply a set of techniques under the unified interfaces by exploiting the execution primitives of GNN models.
For instance, FlexGraph \cite{wang2021flexgraph} identifies a two-dimensional categorization for various GNN models, and proposes strategies for hybrid workload balancing and overlapping processing. % NeuGraph, Seastar, FlexGraph
For scaling to many-GPU systems, a growing number of works \cite{zhang2020agl, gandhi2021p3, thorpe2021dorylus} has proposed to improve the parallelism across multiple servers, \textit{e.g.}, $P^3$ \cite{gandhi2021p3} designs a pipelined push-pull mechanism for accommodating more data at layered iteration, and Dorylus \cite{thorpe2021dorylus} introduces serverless functions to enable affordable yet massive parallelism with CPU servers.
For reducing communication during distributed runtime, many systems \cite{tripathy2020reducing, zheng2020distdgl, zhang2020agl} focus on improving graph layout to avoid bandwidth bottlenecks, while some \cite{cai2021dgcl} intend to design efficient routing algorithms for inter-server data exchange.
Although these systems achieve fair performance in hosting distributed GNN execution, they are all considered and designed for datacenter-level computing or with cloud assistance, supposing a powerful ensemble of unlimited computing capabilities and superior storage. 
These assumptions, however, fail to hold again for edge computing environments, where the computing resources are typically heterogeneous \cite{shi2016edge, zhou2019edge} and the proximity between clients and edge servers (\textit{w.r.t.} data collection) cannot be omitted.
To unleash the architectural benefits of edge computing, Fograph \cite{zeng2022fograph} first investigates GNN processing with vicinal fog servers, and proposes a distributed inference system for real-time serving. 
Yet it merely pursues performance indicators on latency and throughput, where a comprehensive cost model for DGPE still lacks, demanding a new formulation for analysis.

\textbf{Federated learning for GNN.}
Training GNN models under Federated Learning (FL) paradigm has recently attracted increasing attention in distributed edge deployment of GNNs \cite{he2021fedgraphnn, fu2022federated, liu2022federated}.
Unlike traditional training techniques that explicitly collect complete data and train at a dedicated location, FL allows distributed clients to maintain their sub-models (trained with local data) while learning a shared model collaboratively through gradient aggregation \cite{xia2021survey, diao2020heterofl, hong2021efficient}.
The design rationale behind FL generally puts privacy into the highest priority, targeting the scenario wherein the distributed graph data are possessed by different organizations \cite{wu2022federated, kairouz2021advances, yin2021comprehensive}.
Yet there are many works contemplate other dimensions for federated GNNs: accelerating the training convergence by tackling clients' heterogeneity and communication efficiency \cite{wang2022federatedscope, meng2021cross}, boosting the training accuracy upon the \textit{not Identically and Independently Distributed} (non-IID) graph data \cite{xie2021federated, wang2020graphfl, zheng2021asfgnn}, and ensuring the security and fairness of cross-silo data aggregation \cite{pei2021decentralized, jiang2022federated}, \textit{etc.}
Nonetheless, DGPE significantly diverges from the FL category in that:
1) DGPE considers distributed GNN inference processing services with a variety of cost optimization objectives, allowing attainable data sharing across edge servers, while FL focuses only on the GNN model training and particularly stresses user privacy by physical data isolation.
2) DGPE targets a technically distinct execution mechanism from FL, where DGPE only touches the distributed clients and edge servers, and processes GNN inference workload in parallel without the orchestration of centralized cloud servers (as required by FL).
Therefore, existing works on federated GNNs optimization cannot be straight exerted to the DGPE problem, necessitating a fire-new solution.

\textbf{Collaborative edge intelligence.}
To enable edge intelligence faster and greener, plenty of efforts have been made to parallelize DL models collaboratively.
A mainstream of existing literature \cite{kang2017neurosurgeon, teerapittayanon2017distributed, li2019edge, zeng2019boomerang} is to split the DL models into two parts, and collaborate their execution across clients and edge servers.
For instance, Edgent \cite{li2019edge} facilitates device-edge synergy with multi-exit models to achieve a tradeoff between inference accuracy and latency, and SPINN \cite{laskaridis2020spinn} further improves system robustness with various types of service level agreements considered.
Another line of works \cite{zhao2018deepthings, mao2017modnn, hadidi2018distributed, zeng2020coedge} exploits the data dimension by partitioning and dispatching inputs to multiple edge servers.
DeepThings \cite{zhao2018deepthings} proposes to partition the input image into grids, and applies an operator fusing technique to reduce communication across edge devices.
CoEdge \cite{zeng2020coedge} designs an adaptive workload partitioning strategy to jointly manage computation and communication throughout cooperation, achieving both energy efficiency and timely responsiveness. Nonetheless, these works all focus on traditional DL models like CNNs and RNNs, ignoring the emerging GNNs which have been widely used in edge scenarios.
Moreover, their formulation and optimization can not be directly applied to GNN given GNN's different computing pattern from other DNNs.

\textbf{Graph data placement.}
From the perspective of the input data structure, DGPE can be regarded as a modern mutant of distributed graph computing systems.
In this respect, there have been a lot of efforts towards optimizing system costs via improving the graph data placement \cite{jiao2014multi, jiao2014optimizing, yu2015location}.
For instance, in online social services, the server location assignment of graph data has been widely studied to meet various kinds of objectives like carbon footprint \cite{jiao2014multi} and Quality of Services (QoS) \cite{jiao2014optimizing}.
In collaborative applications deployment, researchers \cite{wang2020service} investigate the placement of functionally associated service entities among a board of servers, in order to minimize the total economic budget.
Nevertheless, the cost optimization for DGPE is yet under exploration, and existing arts consider neither GNN's distinctive computing characteristics nor the impact of potential input graph evolution, remaining a vacuum for GNN-oriented graph layout optimization.

\section{System Model}
\label{sec:system_model}

This section presents the system model for Distributed GNN Processing over heterogeneous Edge servers (DGPE) and the problem formulation of cost minimization.
Table \ref{tab:notations} summarizes the main notations used in our formulation.

\subsection{System Overview}
We target a DGPE service upon a multi-tier architecture as in Fig. \ref{fig:scenario}, where edge servers are geographically distributed in different areas and clients have their data individually collected from certain access points (\textit{e.g.}, 5G base station, IoT gateway) \cite{yang2019multi, yang20226g}.
For ease of formulation, we focus the GNN processing workload on inference.
Central to DGPE, we especially stress two types of graphs: 1) the \textit{edge network} that hosts distributed model execution, and 2) the \textit{data graph} formed by clients' associated data, which feeds the GNN model as the input graph.
We bridge them by further defining \textit{graph layout}.
To avoid ambiguity, we exclusively use $i$, $j$ to index the servers in the edge network, whereas $v$ and $u$ for the clients inside the data graph.

\textbf{Edge network.}
Consider an edge network $\mathcal{T}=(\mathcal{D}, \mathcal{W})$ that comprises a set $\mathcal{D}$ of edge servers, which are spatially distributed in a certain area (like a city) and are organized in a multi-tier architecture.
The edge servers in $\mathcal{D}$ are interconnected via a city-wide area network, and we use $\mathcal{W}=\{w_{ij}|i,j\in\mathcal{D}\}$ to characterize their connectivity where $w_{ij}=1$ if the edge servers $i$ and $j$ are mutually accessible and $w_{ij}=0$ or else.
Each edge server associates several APs (\textit{e.g.}, 5G base station, IoT gateway) that allow clients to participate in services.
Without loss of generality, the GNN processing workload on a single edge server is assumed to be containerized with lightweight virtualization techniques \cite{xiong2018extend, barbalace2020edge}, and can thus be flexibly allocated with dedicated computing resources and shared across edges.
This also enables system heterogeneity, \textit{i.e.}, the capability of GNN serving may vary across edge servers.

\textbf{Data graph.}
The input of the GNN execution is given by a data graph $\mathcal{G} = (\mathcal{V}, \mathcal{E})$, where the clients contribute as data points in $\mathcal{V}$ and $\mathcal{E} = \{e_{vu}|v,u \in \mathcal{V}\}$ indicates their topology.
Specifically, $e_{vu}=1$ if two clients $v$ and $u$ have links\footnote{To avoid ambiguity, we use \textit{links} to indicate relationships between vertices (clients) in the data graph, while exclusively leave \textit{edge} for edge servers.} in $\mathcal{G}$ and $e_{vu}=0$ otherwise.
Therefore, a client $v$'s direct neighbors is exactly $\mathcal{N}_v = \{u|e_{vu}=1\}$.
From the model aspect, a client $v$ initially issues a feature $h^{(0)}_v$ (which has a dimension size $s_0$), fed as the input vector for GNN processing as discussed in Sec. \ref{sec:gnn}.
As the execution proceeds, we label a vertex's embedding by $h^{(k)}_v$ in the $k$-th layer, with each in dimension size $s_k$.
Note that the clients can freely access or exit the services, and establish or abolish a connection with each other, which may incur topological changes on the data graph.
Albeit, due to the expressiveness of GNN models, the evolution is usually within a small extent \cite{skarding2021foundations}, where dramatic changes are not allowed.

\begin{table}[t] 
  \caption{Main notations used in our formulation.}
  \label{tab:notations}
  \centering
  \begin{tabular}{|c|l|}
    \hline 
    \textbf{Symbol} & \makecell[c]{\textbf{Description}} \\ \hline \hline
    $\mathcal{T}, \mathcal{D}, \mathcal{W}$ & \makecell[l]{The edge network $\mathcal{T}$ that consists of a set $\mathcal{D}$ of edge\\servers and a set $\mathcal{W}$ of their connections} \\ 
    $\mathcal{G}, \mathcal{V}, \mathcal{E}$ & \makecell[l]{The data graph $\mathcal{G} $ with vertices set $\mathcal{V}$ and link set $\mathcal{E}$}  \\ 
    $\pi, x_{vi}$ & \makecell[l]{The graph layout $\pi$ of binary variables $x_{vi}$ that identi-\\fies whether vertex $v$ is assigned to edge server $i$} \\
    $C$ & \makecell[l]{The total cost function $C$ that covers four types of cost\\factors in data collection $C_U$, GNN computation $C_P$,\\cross-edge traffic $C_T$, and edge server maintenance $C_M$ }\\ 
    $\mathcal{N}_v$ & The set of vertex $v$'s neighbors in $\mathcal{G}$\\
    $h^{(k)}_{v}$ & The feature vector of vertex $v$ at the $k$-th GNN layer\\
    $\mu_{vi}$ & Cost of uploading a client $v$' data to edge server $i$\\
    $\alpha_i, \beta_i, \gamma_i$ & Cost parameters in computing GNN at edge server $i$\\ 
    $\tau_{ij}$ & Cost of transferring a data unit across edges $i$ and $j$ \\ 
    $\rho_i, \varepsilon_i$ & Cost parameters of provisioning edge server $i$ \\ 
    $\mathcal{A}(v)$ & \makecell[l]{The auxiliary graph that comprises vertex $v$ and its\\associated neighbors} \\ 
    \hline
  \end{tabular}
\end{table}

\textbf{Graph layout.}
We use graph layout $\pi$ to couple the edge network $\mathcal{T}$ and the data graph $\mathcal{G}$ and define how distributed GNN processing is hosted over them.
Particularly, we apply binary variables $\pi=\{x_{vi}|v\in\mathcal{V}, i\in\mathcal{D}\}$ to characterize the data placement, such that $x_{vi}=1$ if client $v$'s data is processed at edge $i$ and $x_{vi}=0$ if not.
$x_{vi}$ complies with $\sum_{i \in \mathcal{D}}x_{vi}=1, \forall v \in \mathcal{V}$, as each client's data can appear in exactly one edge server\footnote{
While vertex-centric graph computing systems widely applied vertex replication across servers, the mainstream GNN systems yet maintain a link-centric execution paradigm without replicating vertex data \cite{scardapane2020distributed, jia2020improving, zhang2020agl}.
Our model thus follows this setting and leaves replication issues in future work.
}.
This can be relevant since the edge servers can readily share their collected graph data with each other during the runtime, and disabling replication can economize the storage overhead.
Remark that graph layout is a workload allocation descriptor beyond traditional data placement: in graph layout, each data unit, \textit{i.e.} vertex, contributes differentiated computation cost (depending on the number of its neighbors), while data placement usually treats data points as individual entities without concerning their associations.
Tuning graph layout can thus effectively optimize distributed GNN processing: by navigating data streams of clients to proper edge servers, we can jointly minimize system cost in terms of data collection, GNN computation, cross-edge traffic, and edge server maintenance.
We then elaborate on these cost factors in the next subsection.

\subsection{Cost Factors}
We now formulate four types of system costs following the lifecycle of GNN processing in DGPE deployment.

\textbf{Data collection.}
The flow of distributed GNN processing begins with collecting graph data from distributed clients.
From a client $v$'s perspective, its data $h_v$ is uploaded to the edge server $i$ via the nearest AP, denoted by $\hat{i}$, in its physical vicinity, as illustrated in Fig. \ref{fig:uploading}.
Assuming the uploading cost of a single vertex' data is quantified by $\mu_{vi}$ from client $v$ to edge $i$, the combined cost for an edge server's data collection is $\sum_{v\in\mathcal{V}} \mu_{vi} x_{vi}$.
The total cost for data collection is therefore given by
\begin{align}
    C_U = \sum_{i\in\mathcal{D}} \sum_{v\in\mathcal{V}} \mu_{vi} x_{vi}. \label{eq:uploading}
\end{align}
Note that the communication between a client and its associated AP is an inevitable procedure for every client within the scope, and is hence irrelevant to the placement decision in essence.
We thus omit such cost for simplicity.

\textbf{GNN computation.}
According to the execution pattern introduced in Sec. \ref{sec:gnn}, to compute inference through a GNN model consists of aggregation and update steps.
For the former, the aggregation of embeddings $\sum_{u\in\mathcal{N}_v}h^{(k-1)}_v$ is linear to the number of a vertex's neighbors $|\mathcal{N}_v|$ and the size $s_{k-1}$ of input feature vectors (of the previous layer $k-1$).
Given a unit cost $\alpha$ of summarizing two vectors, the computation cost of aggregation can be calculated in $\alpha |\mathcal{N}_v| s_{k-1}$.
For the latter, the update step first weights aggregated results with $W^{(k)}$ and next applies an activation function $\sigma(\cdot)$.
With regard to its computing properties \cite{zhao2021cm}, the costs of these two arithmetical operations are linear to the multiplication of the sizes of input/output feature vectors, \textit{i.e.}, $s_{k-1}s_{k}$, and the obtained embedding vector size $s_{k}$, respectively.
Given $\beta$ as the unit cost for matrix-vector multiplication and $\gamma$ for applying activation, the cost of update is $\beta s_{k-1} s_{k} + \gamma s_{k}$.
Putting the two steps together in a $K$-layer GNN model, the computation cost of computing vertex $v$ on edge server $i$ is
\begin{align}
    C_P(v,i) = \sum_{k\in\mathcal{K}} ( \alpha_i |\mathcal{N}_v| s_{k-1} + \beta_i s_{k-1} s_{k} + \gamma_i s_{k} ),\label{eq:computation_v}
\end{align}
where $\alpha_i$, $\beta_i$, and $\gamma_i$ can be heterogeneous across edge servers.
Also, these parameters may vary for different GNN models, as they may apply different aggregation functions and neural network operators.
To sum up the total computation cost for DGPE, we apply the above formula (\ref{eq:computation_v}) to all vertices on all edge servers, and obtain 
\begin{align}
    C_P = \sum_{i\in\mathcal{D}} \sum_{v\in\mathcal{V}} C_P(v,i) x_{vi}. \label{eq:computation}
\end{align}

\begin{figure}[t]
  \centering
  \includegraphics[width=0.8\linewidth]{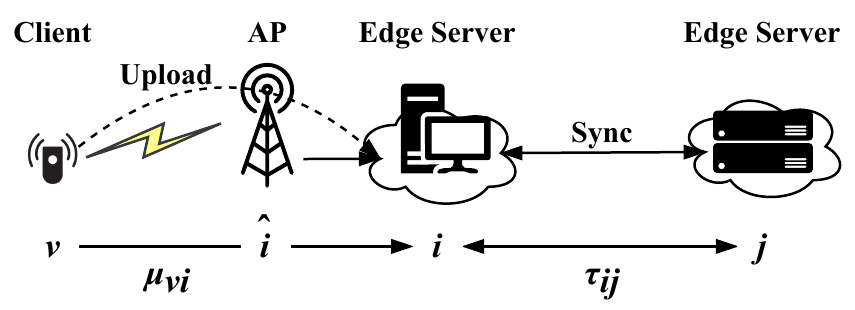}
  \caption{Illustration of the data collection cost and cross-edge traffic cost for a single client's data.
  }
  \label{fig:uploading}
\end{figure}

\textbf{Cross-edge traffic.}
To accurately render GNN processing during the distributed runtime, besides computing GNN over their own resident graphs, edge servers need to pull vertices' data from each other.
This attributes to the unique computing pattern of GNN, where each vertex's computation necessarily relies on its neighbors' data.
For the data graph spans over multiple edge servers, the data to be transferred is the feature vectors from vertices that have connections across edge servers.
For instance, in Fig. \ref{fig:gnn}, \textit{B}'s feature vectors is ought to be sent to the edge server that \textit{A} locates if \textit{A} and \textit{B} are placed at different servers, when computing vertex \textit{A}'s embedding.
In our modeling, these cross-edge connections can be identified by $e_{uv} w_{ij} x_{vi} x_{uj}$, $\forall u,v \in \mathcal{V}, \forall i,j \in \mathcal{D}$, where $e_{uv} \in \mathcal{E}$ and $w_{ij} \in \mathcal{W}$ check whether there is subsistent links of $\langle u,v \rangle$ or $\langle i,j \rangle$, respectively.
Existing distributed GNN systems \cite{zeng2022fograph, zhu2019aligraph, zhang2020agl} typically leverage the Bulk Synchronous Parallel (BSP) model \cite{valiant1990bridging} to orchestrate parallelism over distributed servers, where a synchronization step is triggered when inter-server data exchange is needed.
We adopt this setting and refer to such synchronization as cross-edge traffic, with a unit communication cost $\tau_{ij}$ for edge pairs $\langle i,j \rangle$ as depicted in Fig. \ref{fig:uploading}.
A large $\tau_{ij}$ occurs when the edge servers $\langle i,j \rangle$ have a long routing distance in the network hierarchy, and it will be set as infinity if they fail to establish a valid connection.
With this parameter, the total cross-edge traffic cost is calculated by
\begin{align}
    C_T = \sum_{i\in\mathcal{D}} \sum_{j\in\mathcal{D}} \sum_{v\in\mathcal{V}} \sum_{u\in\mathcal{V}} \tau_{ij} e_{uv} w_{ij} x_{vi} x_{uj}. \label{eq:traffic}
\end{align}

\textbf{Edge server maintenance.}
Provisioning a DGPE service instance at an edge server requires loading an image containing the GNN model, booting it, and sustaining its runtime, and thus the maintenance cost is to be paid.
In principle, we divide it into two types.
One is data-dependent, such as the monetary cost for hosting the computing and storage resources, which is related to the number of vertices at the server.
The other is data-independent which is typically a one-shot expenditure like the cost of container launching and machine cooling.
Let $\rho_i$ be the data-dependent cost per vertex and $\varepsilon_i$ for the data-independent one at edge $i$, we then have the edge server maintenance cost as
\begin{align}
    C_M = \sum_{i\in\mathcal{D}} (\sum_{v\in\mathcal{V}} \rho_{i} x_{vi} + \varepsilon_i). \label{eq:maintanence}
\end{align}

\subsection{Cost-Efficient Graph Layout Optimization Problem}

Summarizing the above cost factors yields the total cost $C$, defined in Eq. (\ref{eq:total_cost}), acting as the objective function.
\begin{align}
    C = C_U + C_P + C_T + C_M. \label{eq:total_cost}
\end{align}

Formally, given a GNN service over the edge network $\mathcal{T}$ with the input data graph $\mathcal{G}$, the cost optimization problem $\mathcal{P}$ for DGPE is to decide a graph layout $\pi$ such that the total system cost $C$ is minimized:
\begin{align}
    \mathcal{P}: \quad \min_{\pi} \ &\mathcal{C}(\pi|\mathcal{T}, \mathcal{G}), \label{eq:objective} \\
    \text{s.t.} \ & \sum_{i \in \mathcal{D}} x_{vi} = 1, \forall v \in \mathcal{V}, \tag{\ref{eq:objective}{a}} \label{eq:constraint_a} \\
    & x_{vi} \in \{0, 1\}, \forall v \in \mathcal{V}, \forall i \in \mathcal{D}, \tag{\ref{eq:objective}{b}} \label{eq:constraint_b}\\
    & \pi = \{x_{vi}| v \in \mathcal{V}, i \in \mathcal{D}\}. \tag{\ref{eq:objective}{c}} \label{eq:constraint_c}
\end{align}

The constraints (\ref{eq:constraint_a})-(\ref{eq:constraint_c}) state that the decision variables $x_{vi}$ in $\pi$ should place each client's data to exactly one edge server.  
Our formulation in $\mathcal{P}$ provides a comprehensive model that is flexible and compatible to capture various types of objectives in DGPE.
For instance, by interpreting parameters like $\alpha_i$ and $\beta_i$ as power consumption per operation, the model can expressively characterize how much energy is used for the service runtime \cite{jiao2014multi, zhou2019online}.
Alternatively, specifying unit costs in the monetary base navigates the model toward an affordable economic consideration \cite{jiao2014optimizing}.
After aligning a unified indicator for all parameters, one can further assign proper weights to cost factors to steer the performance optimization on specific aspects of the system.

Despite its generalizability, optimizing $\mathcal{P}$ is non-trivial due to its complex combinatorial nature and the decision variables' nonlinear, intertwined relationships implicated in GNN's neighbor-aggregation mechanism.
The heterogeneity of edge servers, along with their diverse communication overhead, further complicates the cost-optimized graph layout scheduling.
On the one hand, a server with higher computing capability typically affords a heavier workload with lower cost, and hence could be assigned more vertices for lower system cost.
On the other hand, if the bandwidth of this server is severely limited, the considerable traffic cost will hinder the assignment of linked vertices to it, which conversely constrains the capability-first scheduling.
In essence, Theorem \ref{thm:np-hard} shows that the corresponding problem is NP-hard, indicating an absence of polynomial algorithms unless P=NP.

\begin{theorem}[Hardness] \label{thm:np-hard}
The cost-efficient graph layout optimization problem $\mathcal{P}$ is NP-hard.
\end{theorem}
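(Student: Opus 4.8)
The plan is to establish NP-hardness by a polynomial-time reduction from the \emph{Multiway Cut} problem (also called Multiterminal Cut), which is known to be NP-hard for any fixed number of terminals $k \geq 3$. Recall that a Multiway Cut instance consists of an undirected graph $G=(V,E)$ with nonnegative link weights together with $k$ designated terminals; the goal is to partition $V$ into $k$ parts, each containing exactly one terminal, so as to minimize the total weight of links whose endpoints fall in different parts. I would show that any such instance can be encoded, in polynomial time, as a special case of $\mathcal{P}$, so that an algorithm solving $\mathcal{P}$ would also solve Multiway Cut. Note that the two-terminal case corresponds to an ordinary minimum cut and is solvable in polynomial time, which is consistent with the paper's later use of iterated $s$--$t$ cuts; the hardness therefore genuinely stems from the multiway (three-or-more-server) structure.

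Concretely, I would build the DGPE instance as follows. Take the data graph $\mathcal{G}=(\mathcal{V},\mathcal{E})$ to be exactly the Multiway Cut graph $G$, carrying the link weights into the model. Instantiate $k$ edge servers in $\mathcal{D}$ that are mutually accessible ($w_{ij}=1$ for all $i,j$), and set the cross-edge unit cost $\tau_{ij}$ equal to the corresponding link weight for $i\neq j$, with $\tau_{ii}=0$. To neutralize every cost term except the traffic term, I would make the servers homogeneous: choosing identical $\alpha_i,\beta_i,\gamma_i$ renders $C_P(v,i)$ independent of $i$, so that $C_P$ collapses to a constant under the assignment constraint $\sum_{i}x_{vi}=1$; likewise identical $\rho_i$ make the data-dependent part of $C_M$ equal to the constant $\rho|\mathcal{V}|$, and the $\varepsilon_i$ terms sum to a fixed constant regardless of $\pi$. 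The crux is pinning each terminal to its own server: for the $\ell$-th terminal I would set its upload cost to zero at server $\ell$ and to a prohibitively large value $M$ at every other server, while setting $\mu_{vi}=0$ for all non-terminal vertices. With $M$ chosen to exceed any attainable traffic cost, no minimizer of $\mathcal{P}$ will ever misplace a terminal.

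Under this construction, the only $\pi$-dependent term is $C_T$, and with $\tau$ set to the link weights it equals, up to the factor of two incurred by the ordered double summation over $(v,u)$ and $(i,j)$, the total weight of links crossing server boundaries. A feasible layout then corresponds exactly to a partition of $\mathcal{V}$ into $k$ parts with one terminal each, and its cost equals a fixed additive constant plus twice the weight of the induced multiway cut. Consequently, minimizing $C$ is equivalent to minimizing the multiway cut, and an optimal layout yields an optimal Multiway Cut solution of matching value. I would phrase this through the decision version (``does there exist a layout with cost at most $B$?'') in order to conclude NP-hardness formally.

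I expect the main obstacle to be the terminal-forcing argument: I must verify that the penalty $M$ can be chosen to be polynomially bounded, for instance one plus the sum of all link weights, while still strictly dominating any conceivable savings in the traffic term, so that the reduction is genuinely polynomial and the correspondence between optima is exact in \emph{both} directions. The remaining work is careful bookkeeping---tracking the additive constants contributed by $C_P$ and $C_M$ and the factor of two from the symmetric double summation---so that optimal \emph{values}, and not merely optimal solutions, are preserved across the reduction.
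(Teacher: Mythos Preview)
Your reduction from Multiway Cut is a different route than the paper's, which reduces from Minimum Weighted Set Cover by zeroing out the traffic term ($\tau_{ij}=0$) and arguing that the residual assignment problem is a weighted set-cover instance. Your approach is arguably more natural here: it exploits precisely the quadratic cross-edge term $C_T$ that makes $\mathcal{P}$ hard and that drives the later graph-cut algorithm, whereas the paper's reduction suppresses that term entirely. It also dovetails nicely with the observation that the two-server case is polynomial via an $s$--$t$ cut.

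There is, however, a concrete gap in your encoding of link weights. In the paper's model the data-graph incidence $e_{uv}$ is a binary indicator, and the traffic cost $C_T=\sum_{i,j}\sum_{v,u}\tau_{ij}e_{uv}w_{ij}x_{vi}x_{uj}$ carries no per-link weight: the only tunable quantity is $\tau_{ij}$, which is indexed by \emph{server pairs}, not by graph links. Your sentence ``set the cross-edge unit cost $\tau_{ij}$ equal to the corresponding link weight'' therefore does not type-check---there is no correspondence between server pairs $\langle i,j\rangle$ and links $\langle u,v\rangle$, and a single choice of $\tau_{ij}$ cannot distinguish links of different weights that happen to be cut by the same server pair. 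The fix is easy: reduce from \emph{unweighted} Multiway Cut (still NP-hard for $k\ge 3$ by Dahlhaus et al.), take $\tau_{ij}=1$ for all $i\neq j$, and then $C_T$ is exactly twice the number of links crossing the partition. With that correction, your terminal-pinning via prohibitive upload costs and your neutralization of $C_P$ and $C_M$ by server homogeneity go through as written, and the polynomial bound $M=1+2|\mathcal{E}|$ suffices.
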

\begin{proof}
We prove the NP-hardness of $\mathcal{P}$ by reducing from the Minimum Weighted Set Cover (\textit{MWSC}) problem, which is known to be NP-hard \cite{chvatal1979greedy}.
Formally, given a universal set $\mathcal{U}$ and a family $\mathcal{S}$ of subsets of $\mathcal{U}$, the \textit{MWSC} problem is to find a set $\mathcal{I} \subseteq \mathcal{S}$ such that all elements in $\mathcal{U}$ is covered by $\mathcal{I}$ while the summed weights of the elements in $\mathcal{I}$ is minimized.
Consider an instance of $\mathcal{P}$, denoted by $\mathcal{P}_0$, with $\tau_{ij}=0$ for every edge pair $\langle i,j \rangle$ (\textit{i.e.} the cost of cross-edge traffic is omitted), we can build a polynomial transformation by recognizing $\langle \mathcal{V}, \mathcal{D} \rangle$ as the universe $\mathcal{U}$, and setting $\mathcal{S} = \{\langle v, i \rangle| v \in \mathcal{V}, i\in \mathcal{D}\}$ and $\mathcal{I} = \mathcal{S}$.
Here the weight of element $\langle v, i \rangle$ is given by $C_U(v,i) + C_P(v,i) + C_M(v,i)$, \textit{i.e.}, the total cost of data collection, GNN computation, and (data-dependent) server maintenance when placing client $v$'s data to edge server $i$.
Since the constraints in $\mathcal{P}_0$ essentially align with the requirement of set covering, $\mathcal{P}_0$ is exactly a \textit{MWSC} problem.
Given that \textit{MWSC} is NP-hard, the above polynomial reduction indicates that $\mathcal{P}$ is also NP-hard, which completes the proof.
\end{proof}

\section{Approximate Algorithm Design for Static Input Graphs}
\label{sec:offline_algorithm}

Provided with $\mathcal{P}$'s NP-hardness, we intend to design an efficient approximate algorithm by exploring and exploiting the structural properties of $\mathcal{P}$.
Concretely, in this section, we consider the case of  $\mathcal{P}$ with static data graphs, where the input graph's topology is invariant (while the clients' input feature data can be varying).
This is relevant to the scenarios wherein the client devices are fixed (\textit{e.g.}, traffic sensory network in which its sensors are settled at dedicated locations) or the clients' relationships can be statistically modeled as a static graph. 
In what follows, we first introduce our insights on the objective function's properties, and next present our algorithm with theoretical performance analysis.

\subsection{Motivation}
On the objective of $\mathcal{P}$, \textit{i.e.} the total cost function $C$, we observe the properties stated in Theorem \ref{thm:pseudo-boolean} and Theorem \ref{thm:submodular}.

\begin{theorem}[Pseudo-boolean quadratic] \label{thm:pseudo-boolean}
The cost function $C$ is a quadratic pseudo-boolean function.
\end{theorem}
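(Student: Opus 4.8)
The plan is to argue directly from the definition: a pseudo-boolean function is a map $\{0,1\}^n \to \mathbb{R}$, and it is \emph{quadratic} precisely when it admits a polynomial representation of degree at most two in its binary arguments. Since the decision variables $x_{vi}$ are binary by constraint (\ref{eq:constraint_b}) and every coefficient appearing in the cost model ($\mu_{vi}$, the aggregated computation weight $C_P(v,i)$, $\tau_{ij}$, $e_{uv}$, $w_{ij}$, $\rho_i$, and $\varepsilon_i$) is a problem constant independent of $\pi$, the total cost $C$ is manifestly a real-valued function of the binary vector $(x_{vi})$. It therefore suffices to bound the degree of each of its four additive components and then collect them into the canonical form $a_0 + \sum a_{vi}\, x_{vi} + \sum a_{vi,uj}\, x_{vi} x_{uj}$.

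First I would inspect the three placement-type costs. By Eq. (\ref{eq:uploading}), (\ref{eq:computation}), and (\ref{eq:maintanence}), $C_U$, $C_P$, and the data-dependent part of $C_M$ are each a single sum of the form $\sum_{i,v} (\text{const}) \cdot x_{vi}$ — note that $C_P(v,i)$ becomes a fixed constant once the static topology pins down $|\mathcal{N}_v|$ — and are hence affine (degree $\le 1$) in the decision variables, while the residual $\sum_i \varepsilon_i$ contributes only the constant term $a_0$. Next I would treat the cross-edge traffic $C_T$ of Eq. (\ref{eq:traffic}): each summand carries the bilinear factor $x_{vi} x_{uj}$ weighted by the constant $\tau_{ij} e_{uv} w_{ij}$, so $C_T$ is a sum of degree-two monomials and thus a quadratic form. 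Summing the three affine pieces with this quadratic piece yields a polynomial of degree at most two, which is exactly the claimed structure.

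The only genuinely delicate bookkeeping — and where I would spend the care — is the canonicalization of $C_T$. The double indexing over ordered pairs $\langle i,j\rangle$ and $\langle v,u\rangle$ produces each interaction twice and also generates degenerate terms; in particular, for the coincident case $u=v,\, i=j$ one has $x_{vi} x_{vi} = x_{vi}$ by idempotence of binary variables, which collapses to a linear contribution rather than inflating the degree. I would therefore symmetrize the coefficients (merging $\langle vi, uj\rangle$ with $\langle uj, vi\rangle$) and absorb any such degenerate terms into the linear part, confirming that no monomial of degree three or higher can arise because no summand of $C$ contains a product of more than two distinct decision variables. This reduction to standard form completes the argument; the main obstacle is thus not conceptual but purely a matter of verifying that every coefficient is a constant and that the degree-two bound is tight yet never exceeded.
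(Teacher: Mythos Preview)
Your proposal is correct and follows essentially the same approach as the paper: both decompose $C$ into a constant part ($\sum_i \varepsilon_i$), a linear part (the combined $C_U$, $C_P$, and data-dependent $C_M$ terms), and a quadratic part ($C_T$), and then observe that since the $x_{vi}$ are binary and $C$ is real-valued, the result is by definition a quadratic pseudo-boolean function. Your additional bookkeeping on symmetrization and the idempotence reduction $x_{vi}^2 = x_{vi}$ is more careful than the paper's own proof but does not change the argument.
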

\begin{proof}
We unfold the cost function $C$ through identifying three components $C_0$, $C_1$ and $C_2$:
\begin{align}
    C = & C_U + C_P + C_T + C_M \notag \\
    = & \overbrace{\sum_{i\in\mathcal{D}} \varepsilon_i }^{C_0}
    + \overbrace{\sum_{i\in\mathcal{D}} \sum_{v\in\mathcal{V}} [\mu_{vi} + C_P(v,i) + \rho_{i} ] x_{vi}}^{C_1} \notag \\
    & + \overbrace{\sum_{i\in\mathcal{D}} \sum_{j\in\mathcal{D}} \sum_{v\in\mathcal{V}} \sum_{u\in\mathcal{V}} \tau_{ij} e_{uv} w_{ij} x_{vi} x_{uj}}^{C_2}, \label{eq:cost_three}
\end{align}
where $C_0$, $C_1$, and $C_2$ are the constant, linear, and quadratic terms toward the variables $x_{vi}$ in $\pi$, respectively.
Since every element in $\pi$ is binary and the value of $C$ lies in $\mathbb{R}$, the cost function essentially constructs a mapping of $\{0,1\}^{|m\times n|} \rightarrow \mathbb{R}$ and hence it is a quadratic pseudo-boolean function.
\end{proof}

\begin{theorem}[Submodularity] \label{thm:submodular}
The cost function $C$ is a submodular function.
\end{theorem}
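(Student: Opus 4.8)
The plan is to leverage the quadratic pseudo-boolean structure already established in Theorem \ref{thm:pseudo-boolean} and to verify submodularity term by term on the decomposition $C = C_0 + C_1 + C_2$. I would invoke the standard characterization that a quadratic pseudo-boolean function $\sum_p \theta_p(x_p) + \sum_{p,q}\theta_{pq}(x_p,x_q)$ is submodular precisely when each pairwise interaction term obeys the regularity inequality $\theta_{pq}(0,0)+\theta_{pq}(1,1)\le \theta_{pq}(0,1)+\theta_{pq}(1,0)$, while constant and purely linear (modular) terms are simultaneously sub- and super-modular and hence never violate submodularity. Since submodularity is closed under addition, it suffices to dispose of $C_0$ and $C_1$ as trivially submodular and to show that the sole quadratic contribution $C_2$ meets the pairwise condition, so that the whole argument collapses to a single inequality on the cross-edge traffic coefficients.

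First I would make precise the domain on which submodularity is claimed. Because of the one-hot constraint $\sum_{i\in\mathcal{D}} x_{vi}=1$, each client $v$ effectively carries a single label $\ell(v)\in\mathcal{D}$ recording its host edge server, and under this constraint the quadratic block collapses into a sum of pairwise label-interaction terms, one per linked client pair: for $e_{uv}=1$ the interaction reduces to $\theta_{vu}(\ell_v,\ell_u)=\tau_{\ell_v\ell_u}\,w_{\ell_v\ell_u}$, which is exactly the cut-type energy that the later graph-cut machinery exploits. The key step is then to check $\theta_{vu}(\alpha,\alpha)+\theta_{vu}(\beta,\beta)\le\theta_{vu}(\alpha,\beta)+\theta_{vu}(\beta,\alpha)$ for any two servers $\alpha,\beta$: the left-hand side vanishes because co-locating two clients on one server incurs no cross-edge traffic ($\tau_{\alpha\alpha}=0$), whereas the right-hand side is a sum of non-negative costs with $\tau_{\alpha\beta}+\tau_{\beta\alpha}\ge 0$. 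Hence $0\le\tau_{\alpha\beta}+\tau_{\beta\alpha}$ holds for every pair, so every pairwise term---and therefore $C_2$, and thus $C$---is submodular.

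I expect the main obstacle to be conceptual rather than computational: pinning down the correct notion of submodularity in the presence of the assignment constraint. If one naively treats $C$ as a set function over the ground set $\mathcal{V}\times\mathcal{D}$ with $x_{vi}=1$ meaning $(v,i)$ is selected, the non-negative quadratic coefficients $\tau_{ij}e_{uv}w_{ij}\ge 0$ make $C_2$ \emph{super}modular, not submodular, since the product $x_{vi}x_{uj}$ exhibits increasing rather than diminishing returns. The resolution---and the subtle point I would stress---is that the one-hot constraint converts this positive-coefficient product into a genuine cut cost, the canonical submodular object, in which co-location is free while separation is penalized. Making this reduction explicit, and thereby justifying that each binary (two-server) restriction of $C$ yields a regular energy solvable by a minimum $s$--$t$ cut, is the crux that links Theorem \ref{thm:submodular} to the graph-cut algorithm developed next.
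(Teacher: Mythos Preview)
Your approach is correct and takes a genuinely different route from the paper's own proof. The paper frames submodularity as a diminishing-returns property over subsets $\mathcal{X}\subseteq\mathcal{Y}\subseteq\mathcal{V}$ of clients: it fixes auxiliary layouts $\pi_{\mathcal{X}},\pi'_{\mathcal{X}}$ and verifies the marginal inequality $F(\mathcal{X},v)\ge F(\mathcal{Y},v)$ separately for each of the four cost factors $C_U,C_P,C_T,C_M$, with the longest case ($C_P$) resting on the inclusion $\mathcal{N}_v\setminus\mathcal{A}(\mathcal{X})\supseteq\mathcal{N}_v\setminus\mathcal{A}(\mathcal{Y})$. You instead work directly from the pseudo-boolean decomposition $C=C_0+C_1+C_2$ of Theorem~\ref{thm:pseudo-boolean}, dispose of $C_0$ and $C_1$ as modular, and collapse the whole claim to the single Kolmogorov--Zabih regularity inequality on the pairwise traffic terms in $C_2$. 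What you gain is a shorter argument that is tightly coupled to the graph-cut algorithm that follows, and---more importantly---you surface a point the paper's proof does not make explicit: on the unconstrained cube $\{0,1\}^{|\mathcal{V}|\times|\mathcal{D}|}$ the non-negative coefficients $\tau_{ij}e_{uv}w_{ij}\ge 0$ render $C_2$ \emph{super}modular, and it is only the one-hot assignment constraint that recasts the energy as a cut cost and restores submodularity in the labeling sense. The paper sidesteps this by choosing a different ground set (subsets of $\mathcal{V}$ rather than of $\mathcal{V}\times\mathcal{D}$), which gives a self-contained diminishing-returns proof but leaves the connection to Kolmogorov's representability theorem implicit; your route makes that connection explicit and is precisely the property the iterative $s$--$t$ cut procedure in Algorithm~\ref{algo:offline} actually exploits.
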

\begin{proof}
The cost function $C$ is submodular if and only if for every subset $\mathcal{X}, \mathcal{Y} \subseteq \mathcal{V}$ with $\mathcal{X} \subseteq \mathcal{Y}$ and every vertex $v \in \mathcal{V} \setminus \mathcal{Y}$, we have $F(\mathcal{X}, v) \geq F(\mathcal{Y}, v)$, where $F(\mathcal{X}, v)$ and $F(\mathcal{Y},v)$ are the marginal cost function for $\mathcal{X}$ and $\mathcal{Y}$, respectively \cite{fujishige2005submodular}:
\begin{align}
    F(\mathcal{X}, v) &= C(\pi'_{\mathcal{X}}|\mathcal{T}, \mathcal{X} \cup \{v\}) - C(\pi_{\mathcal{X}}|\mathcal{T}, \mathcal{X}),\\
    F(\mathcal{Y}, v) &=  C(\pi'_{\mathcal{Y}}|\mathcal{T}, \mathcal{Y} \cup \{v\}) -  C(\pi_{\mathcal{Y}}|\mathcal{T}, \mathcal{Y}).
\end{align}

Here we use $\pi_{\mathcal{X}}$, $\pi_{\mathcal{Y}}$ and $\pi'_{\mathcal{X}}$, $\pi'_{\mathcal{Y}}$ to indicate the graph layout of $\mathcal{X}$, $\mathcal{Y}$ before and after adding vertex $v$, respectively.
We enforce $\pi_{\mathcal{X}} \subseteq \pi'_{\mathcal{X}}$ and $\pi_{\mathcal{Y}} \subseteq \pi'_{\mathcal{Y}}$ to ensure the placement of origin vertices is invariant, and impose $\pi'_{\mathcal{X}} \subseteq \pi'_{\mathcal{Y}}$ so that the newly-added vertex $v$, to either $\mathcal{X}$ or $\mathcal{Y}$, is assigned to the same edge server.

We now show the four factors in $C$ all satisfy the above sufficient and necessary condition, case by case.

\textit{1) Data collection cost} $C_U$.
As the data collection cost is counted individually for each vertex, the marginal cost for appending a new vertex is obviously the same for both $\mathcal{X}$ and $\mathcal{Y}$.
Hence, $F_U(\mathcal{X}, v) = F_U(\mathcal{Y},v)$ and $C_U$ is submodular.

\textit{2) GNN computation cost} $C_P$.
As discussed in Sec. \ref{sec:gnn}, to compute vertex $v$ through a GNN layer requires a local auxiliary graph that composes the vertex itself and its neighbors.
We denote such an auxiliary graph for a single GNN layer as $\mathcal{A}(v) = \mathcal{N}_v \cup v$, and consequently we have $\mathcal{A}({\mathcal{X}}) = (\bigcup_{v \in \mathcal{X}}\mathcal{N}_v) \cup \mathcal{X}$ and $\mathcal{A}({\mathcal{Y}}) = (\bigcup_{v \in \mathcal{Y}}\mathcal{N}_v) \cup \mathcal{Y}$.
Since $\mathcal{X} \subseteq \mathcal{Y}$, for each layer $\mathcal{A}({\mathcal{X}}) \subseteq \mathcal{A}({\mathcal{Y}})$ holds.
Therefore, for the new vertex $v$ we have $\mathcal{N}_v \setminus \mathcal{A}({\mathcal{X}}) \supseteq \mathcal{N}_v \setminus \mathcal{A}({\mathcal{Y}})$.
For any additional vertex $v \in \mathcal{V} \setminus \mathcal{Y}$, $\mathcal{A}({\mathcal{X} \cup \{v\}}) = \mathcal{A}({\mathcal{X}}) \cup \mathcal{N}_v$.
Then for the marginal cost function for $C_P$ on a dedicated edge and GNN layer, we have
\begin{align}
    F_P(\mathcal{X}, v) 
    &= C_P(\pi'_{\mathcal{X}}|\mathcal{T}, \mathcal{X} \cup \{v\}) - C_P(\pi_{\mathcal{X}}|\mathcal{T}) \notag\\
    &= \sum_{u\in\mathcal{A}({\mathcal{X} \cup \{v\}})}C_P(u) - \sum_{u\in\mathcal{A}({\mathcal{X}})}C_P(u) \notag\\
    &= \sum_{u\in\mathcal{N}_v \setminus \mathcal{A}({\mathcal{X}})}C_P(u) \notag\\
    &\geq \sum_{u\in\mathcal{N}_v \setminus \mathcal{A}({\mathcal{Y}})}C_P(u) \notag\\
    &= F_P(\mathcal{Y},v).
\end{align}
This thereby results in that $C_P$ is submodular.

\textit{3) Cross-edge traffic cost} $C_T$.
The cross-edge traffic cost is induced when two vertices that are separately located at different edges have a relationship.
As we insert a vertex $v$ to $\mathcal{X}$, there are two circumstances between $v$ and the vertices in $\mathcal{X}$, \textit{i.e.} connected or unconnected. 
For the former, notice that a budget of traffic cost will be paid if the newly-added connection(s) bridge two edge servers, or else the connection(s) are within the same edge server and the total traffic cost stays changeless.
Both cases will incur the same marginal traffic cost for $\mathcal{X}$ and $\mathcal{Y}$ since $\mathcal{X} \subseteq \mathcal{Y}$.
For the latter, the isolated vertex will not add additional traffic cost.
In summary, $F_T(\mathcal{X}, v) = F_T(\mathcal{Y},v)$ and $C_T$ is submodular.

\textit{4) Edge server maintenance cost} $C_M$.
Analogous to the data collection cost, attaching a new vertex $v$ introduces a unit of supplementary cost for both $\mathcal{X}$ and $\mathcal{Y}$, since $v$ is placed at the same edge server provided with $\pi'_{\mathcal{X}} \subseteq \pi'_{\mathcal{Y}}$.
Therefore, $F_M(\mathcal{X}, v) = F_M(\mathcal{Y},v)$ and $C_U$ is submodular.

The discussion above shows all cost factors in $C$ are submodular, which directly follows that $C$ is a submodular function as $C$ is a linear summation of them \cite{krause2014submodular}.
\end{proof}

The above theorems reveal the cost function $C$'s structural properties of pseudo-boolean quadratic and submodularity, inherently implicated in GNN's computing pattern of neighbor aggregation.
This inspires two insights on solving $\mathcal{P}$.
First, the coexistence of these two properties indicates the cost function is representable by a graph when there are exactly two edge servers, with vertices and links in it attaching variables' values, according to Kolmogorov's theorem \cite{kolmogorov2004energy}.
Upon such an auxiliary representation, we can therefore convert the cost optimization problem (for two edge servers) into a minimum graph cut issue, utilizing existing efficient max-flow algorithms for efficient solving.
More concretely, let the two edge servers respectively be the source \textit{s} and terminal $t$, to schedule a cost-minimized graph layout among the edges and a group of clients is equivalent to finding the minimum \textit{s-t} cut in the auxiliary graph covering these clients and edge servers.
Theorem \ref{thm:correctness} provides a formal proof and we will explain more details in the description of our algorithm (Sec. \ref{sec:offline_algorithm_detail}).

Second, the implicit submodularity of the problem induces a greedy heuristic, which has been proven to be a satisfactory means with a bounded approximation ratio for submodular optimization \cite{krause2014submodular}.
Combining the two insights inspires us to transform the cost minimization problem for multiple edge servers into a series of minimum cut issues on individual pairs of edge servers, and greedily search for a feasible graph layout to approach the optimal solution of $\mathcal{P}$.

\subsection{Cost Optimization via Iterative Graph Cuts}
\label{sec:offline_algorithm_detail}

We now explain our graph-cut based algorithm, called GLAD-S, for solving $\mathcal{P}$ with \textbf{S}tatic input graphs, as described in Algorithm \ref{algo:offline}.
Fig. \ref{fig:graph-cut} describes an instance of GLAD-S's procedure.
The key idea of GLAD-S is to iteratively traverse the pairs of every two edge servers and find the minimum \textit{s-t} cut for each pair, where each obtained minimum cut set reflects an optimal assignment of involved clients to the pair of the selected two edge servers.
Its performance analysis is provided in the next subsection.

Algorithm \ref{algo:offline} accepts the edge network $\mathcal{T}$, the data graph $\mathcal{D}$, the parameterized cost function $C$, and an operator-defined measurement of convergence $R$ as input, and aims at exporting an optimized graph layout $\pi$ that minimizes $C(\pi)$.
The algorithm begins with randomizing a graph layout $\pi$ by assigning vertices to arbitrary edge servers, acting as the starting point of searching (Fig. \ref{fig:graph-cut}(b)).
A variable $r$ is defined to count how many checking times have been taken to judge a valid convergence, and is initialized using 0 (Line 2).

We now dive into iterative graph cuts as long as $r \leq R$.
For each iteration, we first select a pair of connected edge servers $\langle i , j \rangle$ with the minimum visited times and log the selection (Line 4).
If there are multiple pairs sharing the same minimum visited (selected) times, we select an arbitrary one.
The rationale behind using the visited times as guidance is to avoid redundant attempts on yet-traversed pairs, so that the exploration space is expanded as large as possible for potential exploitation.
With respect to the selected pair of edge servers $\langle i, j\rangle$, an auxiliary graph $\mathcal{A}(i,j)$ is constructed in two steps:
1) generate a graph by connecting both edge servers $i$ and $j$ to every vertex (client) that has been associated with one of them, as visualized in Fig. \ref{fig:graph-cut}(c), and 
2) assign an appropriate weight to each connection in this auxiliary graph.
The weight between a vertex $v$ and an edge server $i$ is calculated by summarizing the unary cost $C_1$ in Eq. (\ref{eq:cost_three}) and the side-effect cost in terms of $v$ and $i$, which reflects how much cost that client $v$ contributes to the system if it is placed at edge $i$.
Here the side-effect cost means the external traffic cost caused by the links that potentially astride $i$ and other edge servers (except $j$).
For instance, the weight of $\langle v, i \rangle$ in Fig. \ref{fig:graph-cut}(c) should include two parts: 1) the cost of data collection, GNN computation, and server maintenance when client $v$ is located at edge $i$, and 2) the traffic cost between edge $i$ and $k$ (since a part of $v$'s neighbors are placed at edge server $k$).
The weight between two vertices $v$ and $u$ is given by $C_T(i,j)$, indicating how much traffic cost will be paid if $v$ and $u$ are respectively placed at edge servers $i$ and $j$.
Given that $C_0$ in Eq. (\ref{eq:cost_three}) is a constant, the above weight instantiation covers all cost factors related to our decision variables, and thus can completely express the objective of $\mathcal{P}$.

\begin{algorithm}[t] 
\caption{GLAD-S: GLAD for Static input data graph} 
\label{algo:offline} 
\begin{algorithmic}[1] % 1: show line numbers
    \REQUIRE ~~\\ % Input
    $\mathcal{T}$: The edge network $(\mathcal{D}, \mathcal{W})$\\
    $\mathcal{G}$: The data graph $(\mathcal{V}, \mathcal{E})$\\
    $C(\pi)$: The system cost function for $\pi$\\
    $R$: The measurement of convergence\\
    \ENSURE ~~\\ % Output
    $\pi$: Optimized graph layout $\{x_{vi}| v \in \mathcal{V}, i \in \mathcal{D}\}$
    \STATE Initialize a randomized graph layout $\pi$
    \STATE $r \leftarrow 0$
    \WHILE{$r \leq R$}
        \STATE Select a pair of connected edge servers $\langle i,j \rangle$ with the minimum visited times
        \STATE Construct an auxiliary graph $\mathcal{A}(i,j)$ \textit{w.r.t.} $\langle i,j \rangle$ \\
        \STATE Solve the minimum s-t cut of $\mathcal{A}(i,j)$
        \STATE Build a graph layout $\pi'$ according to Eq. (\ref{eq:mapping}) from the obtained minimum cut set
        \IF{$C(\pi') < C(\pi)$}
            \STATE $\pi \leftarrow \pi'$
            \STATE $r \leftarrow 0$
        \ELSE
            \STATE $r \leftarrow r + 1$
        \ENDIF
    \ENDWHILE
    \RETURN $\pi$
\end{algorithmic}
\end{algorithm}

\begin{figure*}[t]
  \centering
  \includegraphics[width=\linewidth]{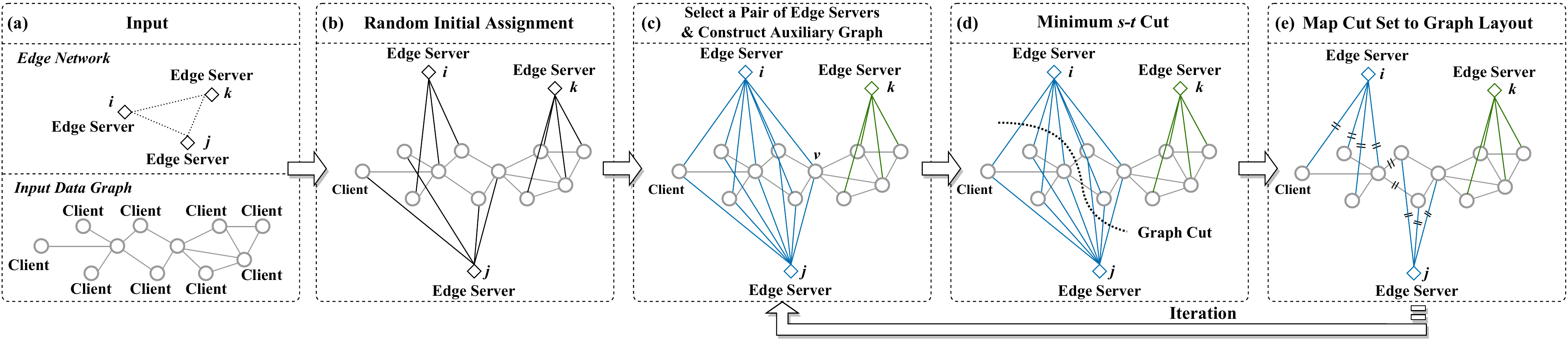}
  \caption{Instance of graph-cut based graph layout optimization. Given an edge network and an input data graph (a), GLAD-S first initializes a graph layout via random assignments (b). Next, it selects an arbitrary pair of edge servers and constructs an augment graph by connecting both edges to every vertex that has been associated with one of them (c). Minimum \textit{s-t} cut is then performed on the augment graph and obtained a set of graph cuts (d). Based on this cut set, it builds a graph layout for the edge servers and the client inside the augment graph, where marked links indicate assignments (e). GLAD-S will pass the obtained result to step (b) as the next iteration for finding a better graph layout until convergence.
  }
  \label{fig:graph-cut}
\end{figure*}

Upon the auxiliary graph $\mathcal{A}(i,j)$, we can always find the set of minimum cuts by identifying $i$ as the source and $j$ as the sink and running the max-flow min-cut algorithm.
Since the minimum \textit{s-t} cut ensures the graph to be split exactly into two parts, the constraints in $\mathcal{P}$ are also guaranteed.
Fig. \ref{fig:graph-cut}(d) marks the selected links by a dashed curve, where other links in the auxiliary graph are not in the cut set.
With the obtained cuts, we apply the mapping policy in Eq. (\ref{eq:mapping}) to transform them into corresponding graph layout $\pi'$ (Fig. \ref{fig:graph-cut}(e)), and examine the attained cost $C(\pi')$.
\begin{align}
\label{eq:mapping}
x_{vi} = \left\{ 
    \begin{array}{ll}
        1, & \textrm{if the link $\langle v,i \rangle$ is in the cut set,}\\
        0, & \textrm{otherwise.}
    \end{array} \right.
\end{align}
If $C(\pi') < C(\pi)$, then this round's attempt finds an improved graph layout and subsequently updates.
The available convergence-checking times $r$ is also reset to $0$ for prospective exploring rounds.
Otherwise, the attempt of this round fails to update $\pi$ and spends one checking opportunity, approaching the convergence one step closer (Line 12).
The algorithm then goes into another iteration by passing the existing graph layout to another graph-cut attempt, \textit{i.e.} from Fig. \ref{fig:graph-cut}(e) to Fig. \ref{fig:graph-cut}(c).
The overall loop terminates until exhausting all convergence checking times $R$, meaning no further cost reduction can be made under the attempt tolerance.
The obtained graph layout $\pi$ is consequently returned.

\textbf{Discussion.}
Algorithm \ref{algo:offline}'s convergence speed and obtained results are largely impacted by its initialization, traversing strategy, and the configuration of $R$.
For initialization, while our pseudocode uses randomization for generality, one can enhance the algorithm by manually devising a generative scheme with prior knowledge for the initial graph layout.
As an example, if the data collection cost of vertices data is exceedingly exorbitant and dominates the total system cost, initializing a graph layout via an uploading-first tactic (which greedily places vertices to edge servers that minimizes $C_U$) would expedite the convergence.
For the traversing strategy, instead of stochastic searching, we guide the selection of prospective edge pairs using the visited times as the proxy, in order to explore as many edge servers as possible for more searching possibilities.
If traversing randomly, however, the algorithm may encounter duplicated edge pairs that have been repeatedly checked, squandering the attempting opportunities (given by $R$) while increasing the probability of falling into a local optimum.
The configuration of $R$ essentially tunes the tradeoff between scheduling overhead and results' quality.
A larger $R$ claims stricter convergence measurement, yielding a better result while requiring more iteration times.
To prioritize the preference on the total system cost, we intend to find the optimal graph layout reachable by GLAD-S.
Since for each iteration, the space of available edge pairs to be selected is $\frac{|\mathcal{D}|(|\mathcal{D}|-1)}{2}$ (where $|\mathcal{D}|$ is the number of edge servers), we set $R=\frac{|\mathcal{D}|(|\mathcal{D}|-1)}{2}$ to ensure an exhaustive searching for the optimum.
Despite the default setting, we will also explore the impact of $R$ on system performance in our evaluation (Sec. \ref{sec:sensitivity}).

\subsection{Performance Analysis}

We present a rigorous analysis of GLAD-S's performance, showing that it has parameterized constant approximation ratio, as well as proving its convergence and time complexity in the following theorems.
Their proofs are provided in the appendix.

\begin{theorem}[Equivalence] \label{thm:correctness}
Given an auxiliary graph $\mathcal{A}(i,j)$ for optimizing the assignments between the
edge servers $i$ and $j$ and their associated clients, solving the minimum \textit{s-t} cut on $\mathcal{A}(i,j)$ can find the cost-minimized graph layout for the edge servers and clients in $\mathcal{A}(i,j)$.
\end{theorem}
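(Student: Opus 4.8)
The plan is to reduce the claim to the classical fact that a submodular quadratic pseudo-boolean function is graph-representable, and then to verify that the specific weights assigned by GLAD-S realize such a representation exactly. First I would fix the selected pair $\langle i,j\rangle$ and freeze the placement of every client residing on the remaining edge servers, so that only the clients associated with $i$ or $j$ remain free decision variables. Encoding each such client $v$ by a binary variable---say $z_v=0$ when $v$ is routed to the source edge $i$ and $z_v=1$ when routed to the sink edge $j$, matching the mapping of Eq. (\ref{eq:mapping})---the restriction of the total cost $C$ to these variables is, by Theorem \ref{thm:pseudo-boolean}, a quadratic pseudo-boolean function, and by Theorem \ref{thm:submodular}, submodular.

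Next I would invoke Kolmogorov's graph-representability result \cite{kolmogorov2004energy}: because every pairwise interaction of the restricted function is submodular (regular), there exists a weighted graph on which, under the bijection between binary assignments and $s$-$t$ cuts, the capacity of the cut induced by any assignment equals the function value up to a fixed additive constant. Here vertices landing on the source side correspond to $z_v=0$ (edge $i$) and those on the sink side to $z_v=1$ (edge $j$); since a cut partitions the vertices into exactly two sides, every induced layout automatically satisfies the single-placement constraints (\ref{eq:constraint_a})--(\ref{eq:constraint_b}), giving feasibility for free.

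The technical core is to show that the weights GLAD-S assigns are precisely such a representation. I would decompose the restricted $C$ into a unary part and a pairwise part. The unary contribution of placing $v$ at $i$ (resp. $j$) gathers the linear term $C_1$ of Eq. (\ref{eq:cost_three})---data collection, GNN computation, and data-dependent maintenance---together with the side-effect traffic between $i$ (resp. $j$) and the frozen third-party edges induced by $v$'s cross-pair links; these are exactly the $\langle v,i\rangle$ and $\langle v,j\rangle$ weights described in Sec. \ref{sec:offline_algorithm_detail}. The pairwise contribution between two free clients $v,u$ is the traffic $C_T(i,j)$ that is charged only when $v$ and $u$ are split across $i$ and $j$; because this term penalizes separation with a nonnegative $\tau_{ij}$, it is of the canonical submodular ``cut'' form and is realized by the vertex-to-vertex edges of $\mathcal{A}(i,j)$. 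Summing over all free clients, the total cut capacity equals the restricted $C$ plus the constant $C_0$ and the fixed cost among frozen vertices, a quantity independent of the $z_v$ and hence irrelevant to the $\arg\min$.

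Finally I would conclude that a minimum $s$-$t$ cut minimizes the cut capacity, hence minimizes the restricted cost, hence produces the cost-minimized layout for the edges and clients inside $\mathcal{A}(i,j)$. I expect the main obstacle to be the bookkeeping in the verification step: carefully confirming that the side-effect traffic to the frozen edges is folded \emph{entirely} into the unary weights (so no cross-pair link is double-counted or omitted), and that every pairwise term indeed satisfies the regularity condition $E(0,0)+E(1,1)\le E(0,1)+E(1,0)$ required for a valid, nonnegative-capacity construction. Fortunately, the submodularity already established in Theorem \ref{thm:submodular} supplies exactly this regularity, so the remaining effort is primarily careful accounting rather than a new structural insight.
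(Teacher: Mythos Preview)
Your proposal is correct and, in fact, considerably more explicit than the paper's own argument. The paper dispatches Theorem~\ref{thm:correctness} in a few lines by contradiction: it posits that the minimum $s$-$t$ cut on $\mathcal{A}(i,j)$ yields cost $c$ while some better layout attains $c^*<c$, then observes that the mapping in Eq.~(\ref{eq:mapping}) turns that better layout back into a cut of $\mathcal{A}(i,j)$ whose weight would undercut $c$, contradicting minimality. This argument tacitly presupposes the very correspondence you take pains to establish---namely, that the cut capacity equals the restricted cost up to an additive constant---and simply uses it in both directions. Your route instead \emph{builds} that correspondence from Theorems~\ref{thm:pseudo-boolean} and~\ref{thm:submodular} together with Kolmogorov's representability theorem, and then verifies term-by-term that GLAD-S's unary and pairwise weights realize it. What the paper's approach buys is brevity; what yours buys is a self-contained justification that the auxiliary graph's weights are not merely plausible but provably correct, and an explicit check that feasibility (the single-placement constraints) falls out of the two-sided cut partition. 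One small caution on conventions: under Eq.~(\ref{eq:mapping}), $x_{vi}=1$ precisely when the link $\langle v,i\rangle$ is severed, which occurs when $v$ lies on the \emph{sink} side of the cut, so double-check that your $z_v$ encoding and the source/sink orientation of the unary weights are aligned consistently with that rule when you write out the details.
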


\begin{theorem}[Approximation ratio] \label{thm:approximation}
Let $\pi$ and $\pi^*$ be the graph layout produced by GLAD-S and the global optimal placement, respectively. The upper bound of $C(\pi)$ is given by $2\lambda C(\pi^*) + \epsilon$, where $\lambda = \frac{\max_{i,j\in\mathcal{D}}\tau_{ij}}{\min_{i,j\in\mathcal{D}}\tau_{ij}}$ and $\epsilon = \sum_{i\in\mathcal{D}} \varepsilon_i$.
\end{theorem}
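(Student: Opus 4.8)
The plan is to show that the layout $\pi$ returned at convergence is a \emph{local optimum} with respect to the pairwise swap moves the algorithm performs, and then to charge its cost against $C(\pi^*)$ via a summed family of inequalities in the spirit of the expansion-move analysis of metric labeling. First I would reuse the decomposition $C = C_0 + C_1 + C_2$ from the proof of Theorem \ref{thm:pseudo-boolean}, where $C_0 = \sum_{i \in \mathcal{D}} \varepsilon_i = \epsilon$ is constant, $C_1$ collects the unary data-collection, computation and (data-dependent) maintenance costs, and $C_2$ is the pairwise cross-edge traffic. Since $C_0$ is independent of $\pi$, it suffices to bound $C_1(\pi) + C_2(\pi) \le 2\lambda\,(C_1(\pi^*) + C_2(\pi^*))$; adding $C_0 = \epsilon$ to both sides and using $\lambda \ge 1$ then yields $C(\pi) \le 2\lambda\,C(\pi^*) + \epsilon$.

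Second, I would establish local optimality precisely. By Theorem \ref{thm:correctness}, each iteration's min-cut produces the cost-minimal bipartition of the vertices currently resident on the chosen pair $\langle i,j \rangle$, holding all other assignments fixed. Because the algorithm terminates only after $R = \frac{|\mathcal{D}|(|\mathcal{D}|-1)}{2}$ consecutive non-improving rounds under the least-visited traversal rule, every pair has been examined without a strict decrease, so no single swap move reduces $C$. The key comparison step then constructs, for each server $\ell \in \mathcal{D}$, a test layout $\pi_\ell$ that agrees with $\pi^*$ on the vertex set $\mathcal{V}_\ell = \{v : \pi^*(v) = \ell\}$ and with $\pi$ elsewhere, and invokes $C(\pi) \le C(\pi_\ell)$. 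Summing these $|\mathcal{D}|$ inequalities, each unary term of $\pi^*$ is counted once while each traffic-bearing link of $\pi$ is charged to at most its two endpoints, which is the source of the factor $2$; the boundedness $\tau_{ij} \le \lambda\,\tau_{i'j'}$ for all connected pairs (so that $\lambda$ is finite, taking the extrema over pairs with $w_{ij}=1$) lets me replace every traffic term incurred by $\pi$ with at most $\lambda$ times a corresponding term incurred by $\pi^*$, contributing the factor $\lambda$ and completing the $2\lambda$ bound.

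The hard part will be reconciling the algorithm's \emph{swap} neighborhood with the \emph{expansion}-style test moves that the charging argument wants: moving all of $\mathcal{V}_\ell$ onto $\ell$ pulls vertices off \emph{every} other server at once, which is not a single min-cut swap between one pair $\langle i,j \rangle$. I would bridge this either by decomposing each $\pi_\ell$ into a finite sequence of pairwise swaps whose net cost change is still nonnegative at the local optimum, or by strengthening the local-optimality certificate to the joint condition guaranteed once all $\frac{|\mathcal{D}|(|\mathcal{D}|-1)}{2}$ pairs are simultaneously stable. A secondary subtlety is that the traffic cost $\tau$ is only a symmetric semi-metric with $\tau_{ii}=0$ and need not satisfy the triangle inequality, so I cannot equate the pairwise terms directly; the $\lambda$-substitution above is precisely what stands in for the missing metric structure, and making that substitution tight rather than over-counting cut links is where the analysis must be most careful.
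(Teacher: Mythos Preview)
Your proposal follows essentially the same route as the paper: for each server $i$, build the test layout $\pi'$ that moves $\mathcal{V}_i = \{v : \pi^*(v) = i\}$ onto $i$ while leaving all other vertices as in $\pi$; invoke local optimality to assert $C(\pi) \le C(\pi')$; split the links into internal, external, and cross sets relative to $\mathcal{V}_i$; bound the cross-set traffic via $\tau_{ij}/\tau_{i'j'} \le \lambda$; and sum over all $i \in \mathcal{D}$, where the double-counting of cross links supplies the factor~$2$. The paper carries this out algebraically (partitioning $\mathcal{G}$ into $\mathcal{G}^{\text{in}}_i$, $\mathcal{G}^{\text{ex}}_i$, $\mathcal{G}^{\text{cr}}_i$ and summing the resulting inequalities), but the skeleton is exactly what you describe.

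On the point you flag as the hard part --- that $\pi \to \pi'$ is an expansion-style move pulling vertices from \emph{all} other servers onto $i$, not a single pairwise swap --- the paper does \emph{not} resolve it: it simply asserts $C(\pi) \le C(\pi')$ on the grounds that ``$\pi$ is the optimal solution within GLAD-S's searching space,'' without decomposing the move into swaps or otherwise justifying why swap-stability implies stability against this larger move. Your instinct that this step needs an additional argument is well founded and is in fact more scrupulous than the paper's own treatment.
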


\begin{theorem}[Convergence] \label{thm:convergence}
GLAD-S is guaranteed to converge.
\end{theorem}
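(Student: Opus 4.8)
The plan is to establish convergence by exhibiting a bounded, monotonically decreasing potential function, namely the system cost $C(\pi)$ itself. The key observation is that GLAD-S only ever commits an update when it strictly decreases the cost, so the sequence of accepted layouts produces a strictly decreasing sequence of cost values. First I would argue that the cost function $C$ is bounded below: every one of the four cost factors $C_U$, $C_P$, $C_T$, and $C_M$ is a finite sum of nonnegative terms (costs are nonnegative quantities, and by construction an edge pair with $\tau_{ij}=\infty$ is never selected as a valid connected pair, so the quadratic term stays finite on any feasible layout), hence $C(\pi)\geq 0$ for all feasible $\pi$. Since the feasible set of layouts is finite --- there are at most $|\mathcal{D}|^{|\mathcal{V}|}$ assignments satisfying constraints (\ref{eq:constraint_a})--(\ref{eq:constraint_c}) --- the cost takes values in a finite set, so there are only finitely many distinct achievable cost levels.

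Next I would formalize the descent property. By Theorem \ref{thm:correctness}, each minimum \textit{s-t} cut on $\mathcal{A}(i,j)$ yields the cost-minimized reassignment of the clients shared between servers $i$ and $j$, so the candidate layout $\pi'$ produced in any iteration never increases the cost relative to $\pi$ restricted to that pair; the acceptance test in Line 8 then guarantees that $\pi$ is replaced only when $C(\pi')<C(\pi)$ strictly. Consequently the value $C(\pi)$ is nonincreasing across iterations and strictly decreases on every accepted update. Because $C$ ranges over a finite set of values and is bounded below, there can be only finitely many strict decreases before the cost can no longer be reduced by any edge-pair reassignment.

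I would then tie this to the termination logic governed by the counter $r$ and the threshold $R$. Every iteration either performs a strict improvement (resetting $r\leftarrow 0$) or fails and increments $r$. Since strict improvements are finite in number, after the last improvement no further reset occurs, so $r$ increases by one each subsequent iteration and reaches $R+1$ in at most $R$ steps, triggering the while-loop exit. The plan is to conclude that the total iteration count is bounded by the (finite) number of strict improvements, each interleaved with at most $R$ failed attempts, which yields a finite total and hence guaranteed termination with a converged layout.

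The main obstacle I anticipate is the descent claim in the second paragraph: one must verify that an optimal two-server reassignment obtained from the cut \emph{cannot} raise the global cost even though the cut only locally rebalances clients between $i$ and $j$. This requires confirming that the side-effect (external traffic) terms folded into the auxiliary-graph weights correctly account for the interaction of the reassigned clients with servers other than $i$ and $j$, so that the cost evaluated globally agrees with the cut objective --- precisely the content guaranteed by Theorem \ref{thm:correctness}. Once that equivalence is invoked, the remaining argument is a routine monotone-bounded-sequence finiteness argument, with no delicate analysis required.
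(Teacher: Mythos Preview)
Your proposal is correct and follows the same core idea as the paper's proof: the accepted cost values form a strictly decreasing, bounded-below sequence, so the algorithm must eventually stop improving and then terminate via the $r$-counter. The paper's version is terser and bounds the limit value by invoking the approximation-ratio result (Theorem~\ref{thm:approximation}), whereas you instead appeal to the finiteness of the layout space $|\mathcal{D}|^{|\mathcal{V}|}$; your route is more self-contained (it does not depend on Theorem~\ref{thm:approximation}) and is more explicit about the termination mechanics through $r$ and $R$, which the paper only addresses implicitly. One trivial slip: after the final reset $r\leftarrow 0$, it takes $R{+}1$ failed iterations (not $R$) for $r$ to exceed $R$; this does not affect the argument.
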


\begin{theorem}[Time complexity] \label{thm:time_complexity}
The time complexity of GLAD-S is $O((V+2)(E+2V)(V+E)R)$, where $V = |\mathcal{V}|$ and $E = |\mathcal{E}|$.
\end{theorem}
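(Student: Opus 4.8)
The plan is to isolate the single dominant operation performed in each pass of the main loop of Algorithm~\ref{algo:offline}, bound the size of the auxiliary graph on which it runs, and multiply the per-pass cost by the number of passes. First I would argue that, within one iteration of the while-loop, the only super-linear work is solving the minimum \textit{s-t} cut on $\mathcal{A}(i,j)$ at Line~6; the remaining steps---selecting the least-visited connected pair $\langle i,j\rangle$, constructing the auxiliary graph, applying the mapping in Eq.~(\ref{eq:mapping}), and evaluating $C(\pi')$ for the acceptance test---are each at most linear (or low-polynomial) in the graph size and are dominated by the max-flow computation. Hence the per-iteration cost equals the cost of one min-cut.

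Second I would bound $\mathcal{A}(i,j)$. It comprises the two selected edge servers, treated as source and sink, together with the clients currently assigned to either of them, so it has at most $V+2$ nodes. Its arcs are of two kinds: the $\langle v,i\rangle$ and $\langle v,j\rangle$ connections linking each client to the two servers, at most $2V$ in total, and the client-client links inherited from $\mathcal{G}$, at most $E$; thus at most $E+2V$ arcs. For an augmenting-path (Edmonds--Karp) max-flow on a network with $n$ nodes and $m$ arcs there are $O(nm)$ augmentations, each found by a breadth-first search costing $O(n+m)$. Substituting $n=V+2$ and $m=E+2V$, and noting $n+m=O(V+E)$, yields a single min-cut cost of $O\bigl((V+2)(E+2V)(V+E)\bigr)$, which accounts for the first three factors of the claimed bound.

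Finally I would multiply by the iteration count. The loop executes only while $r\le R$, and Theorem~\ref{thm:convergence} guarantees termination; since each pass performs exactly one min-cut, the total running time is the per-pass cost times the number of passes, which I would bound by $R$, giving $O\bigl((V+2)(E+2V)(V+E)R\bigr)$. The main obstacle lies precisely here: because every accepted improvement resets $r$ to $0$, the passes are not simply $R$ non-improving checks but an interleaving of improving and non-improving rounds. The careful part of the argument is to show the total number of passes stays $O(R)$---charging each improving pass against the strictly decreasing, lower-bounded cost sequence underlying the convergence proof, so that the improving rounds do not inflate the order beyond the $R$ convergence-checking budget.
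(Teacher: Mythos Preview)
Your decomposition differs from the paper's in where the factor $(V+E)$ comes from, and the difference matters because your iteration-count bound is not correct.

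The paper attributes the $(V+E)$ factor to the \emph{number of iterations}, not to the per-iteration min-cut. It uses Orlin's $O(nm)$ max-flow, so one pass on $\mathcal{A}(i,j)$ costs $O\bigl((V+2)(E+2V)\bigr)$. For the loop count it argues that every \emph{improving} pass lowers $C$ by at least some fixed $\delta>0$, while the initial cost is $O(V+E)$; hence there are $O(V+E)$ improving passes, and since each can be followed by at most $R$ non-improving checks, the total number of passes is $O\bigl((V+E)R\bigr)$. Multiplying gives the stated bound.

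Your plan instead puts the $(V+E)$ factor inside each iteration by using Edmonds--Karp, and then tries to cap the number of passes at $O(R)$. That last step is the gap: an improving pass resets $r$ to $0$, so the total number of passes is $\Theta(KR)$ where $K$ is the number of improving passes, and there is no reason $K=O(1)$. Your own ``charging'' sketch---each improving pass strictly decreases a lower-bounded cost---is exactly the paper's argument, but it yields $K=O(V+E)$, not $K=O(1)$. If you carry your Edmonds--Karp per-pass cost through the correct iteration count you obtain $O\bigl((V+2)(E+2V)(V+E)^2 R\bigr)$, one $(V+E)$ too many. To recover the theorem you must either switch to an $O(nm)$ max-flow for the per-pass cost (as the paper does) or genuinely bound the improving passes by a constant, which does not hold here.
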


\section{Adaptive Algorithm Design for Evolved Input Graphs}
\label{sec:online_algorithm}

The discussion above focuses on $\mathcal{P}$ with static input data graphs.
However, there are still many edge deployments undertaking GNN services with input graphs of dynamic topology.
For instance, real-world social networks are usually evolving over time, with incoming users and changing friendships.
In this section, we concentrate on design algorithms for $\mathcal{P}$ with such evolved input graphs.
We will first present formal modeling of graph evolution, and next explain our incremental graph layout optimization as well as the adaptive scheduling algorithm.

\subsection{Modeling Graph Evolution}

Without loss of generality, let the DGPE system works in a time-fragmented fashion within a large time span of $t \in \{1,2,\cdots\}$, we denote the data graph at time $t$ as $\mathcal{G}(t)$.
In light of the traits of GNN services, a minimum time slot could be in the length of minutes, since practical deployment usually profiles system information periodically \cite{zheng2020distdgl, zhang2020agl} and input graph changes are thereafter detected.
Within each time slot, the system, including both the data graph and the edge network, is considered invariant, meaning that all cost-related parameters are fixed.

The evolution of an input data graph can be divided into four categories, in terms of inserting/deleting a vertex/link, as exemplified in Fig. \ref{fig:graph_dynamics}.
Their physical meaning can obviously be interpreted in edge scenarios:
for vertices' changes, an insertion means a new participant in GNN services, whereas a deletion can be a departure;
for links' changes, appending a new connection indicates a fresh relationship, and losing a bond can happen when a user leaves.
Due to GNN's expressiveness, all these modifications are restricted to be slight and within a number of candidates \cite{skarding2021foundations}, since a drastic change may significantly decline the inference accuracy and thus disable the service.

When the above changes take place across two time slots, the system performance may drop dramatically, even with an optimal graph layout in static optimization.
For instance, introducing a new client to the service requires arranging it to an edge server, and appending a new link between present vertices located at different edge servers incurs additional traffic cost.
All of these variations demand a graph layout update.
While the total cost function can be optimized separately in every time slot by invoking GLAD-S algorithm repeatedly, their graph layout outcome is very possibly inconsistent, leading to the issue of vertices migration.
However, as we move vertices across edge servers, their corresponding clients undergo interruptions, waiting for data replication, states transfer, and service rebooting, which may severely affect user experiences and QoS. 
This motivates us to contrive a lightweight strategy to minimize impacts on participating clients.

\begin{figure}[t]
  \centering
  \includegraphics[width=0.8\linewidth]{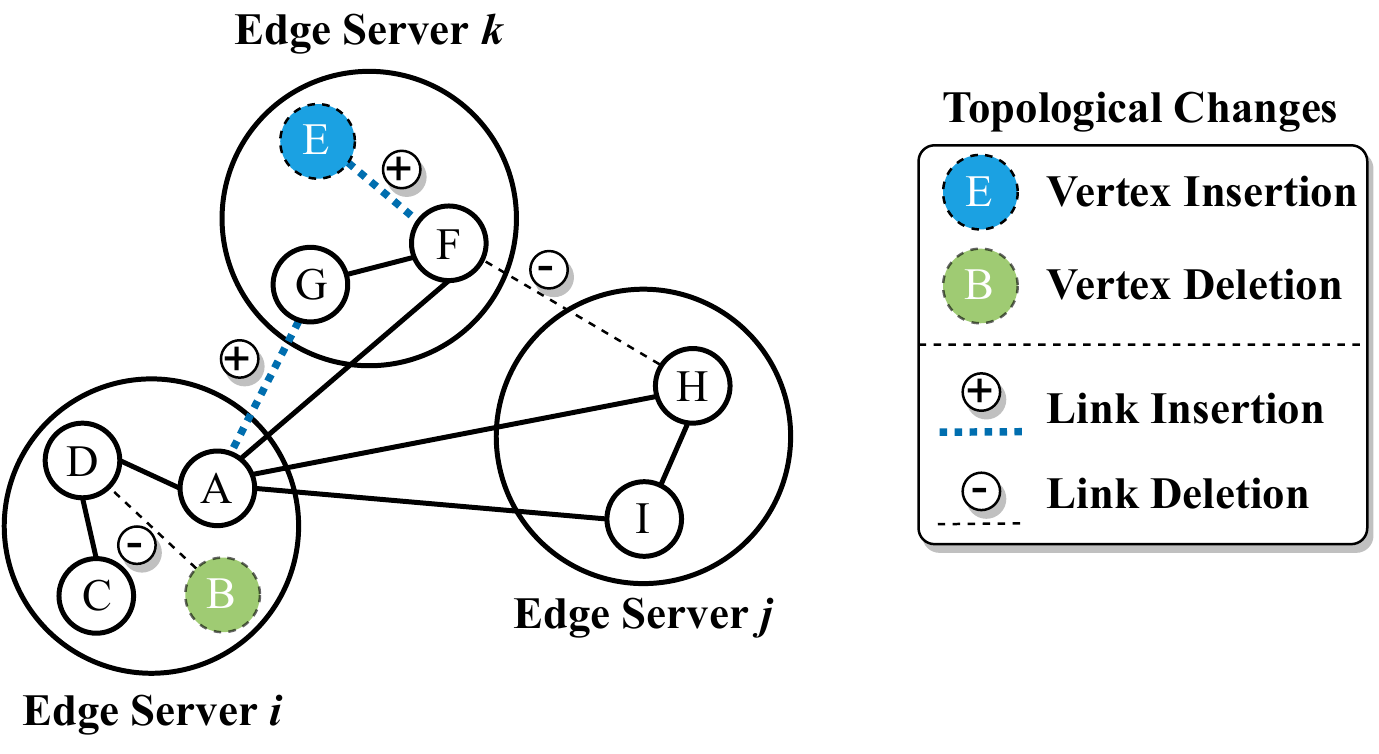}
  \caption{Possible topological graphs on a data graph, where the clients (vertices) have been placed to three edge servers. The changes include vertex insertion, vertex deletion, link insertion, and link deletion.
  }
  \label{fig:graph_dynamics}
\end{figure}

\subsection{Incremental Graph Layout Optimization}
Leveraging the insights in Sec. \ref{sec:offline_algorithm}, we propose to solve the dynamic issue with an incremental graph layout improvement strategy called GLAD-E (\textbf{GLAD} for \textbf{E}volved input graphs).
The key idea is to utilize the efficient transformation in GLAD-S, while limiting the scope of influenced vertices.

Before diving into algorithm details, we categorize graph evolution into four types from the perspective of cost minimization, corresponding to the instances in Fig. \ref{fig:graph_dynamics}.
1) \textit{Vertex insertion} on the existing graph layout clearly incurs additional cost (vertex \textit{E}), particularly for data collection, GNN computation, and edge server maintenance.
2) \textit{Vertex deletion} is instead positive to cost minimization obviously since it reduces the overall system workload (vertex \textit{B}).
3) \textit{Link insertion} has dual implications on the cost: if the terminal vertices of the new link are both located at the same server (link $\langle E,F \rangle$ in Fig. \ref{fig:graph_dynamics}),  the total cost remains changeless; otherwise a budget of traffic is demanded (link $\langle A,G \rangle$).
4) \textit{Link deletion} diminishes the total cost analogous to the vertex deletion case (links $\langle B,D \rangle$ and $\langle F,H \rangle$).
In summary, only a part of insertion cases needs consideration towards optimizing $\mathcal{P}$.

Keeping this fact in mind, we can largely narrow down the range of layout to be adjusted without hurting the total cost.
Algorithm \ref{algo:incremental} describes the procedure of GLAD-E.
Given the data graph at two successive time slots $t-1$ and $t$, we first filter those vertices that are involved in vertex insertion and cross-edge insertions.
Regarding the filtered ones, we can extract a subgraph $\mathcal{G}^+$ upon them from the present data graph $\mathcal{G}(t)$, and subsequently pass it as input to the GLAD-S algorithm and obtain an optimized layout $\pi^+$ (Line 2-3).
For the unfiltered vertices, we can also extract a graph layout $\pi^-$ of them from the existing layout $\pi(t-1)$.
Since $\pi^+$ and $\pi^-$ have accommodated all vertices at time slot $t$, incorporating them yields the optimized layout $\pi(t)$ adapting to the latest changes.
By doing so, each invocation of GLAD-E only operates on the vertices that possibly increase the total cost, effectively optimizing the system cost.
Moreover, as the filtered vertices only occupy a small portion of the whole, the overhead of incremental updates is minimum to the overall QoS.

\begin{algorithm}[t] 
\caption{GLAD-E: GLAD for Evolved input data graph} 
\label{algo:incremental} 
\begin{algorithmic}[1] % 1: show line numbers
    \REQUIRE ~~\\ % Input
    $\mathcal{T}$: The edge network\\
    $\mathcal{G}(t-1)$, $\mathcal{G}(t)$: The data graph at time slot $t-1$ and $t$\\
    $\pi(t-1)$: The graph layout at time slot $t-1$ \\
    $C(\pi)$: The total cost function for a given $\pi$\\
    $R$: The measurement of convergence\\
    \ENSURE ~~\\ % Output
    $\pi(t)$: Optimized graph layout at time slot $t$
    \STATE Filter the vertices that are newly added or have new neighbors at other edge servers from $\mathcal{G}(t-1)$ and $\mathcal{G}(t)$
    \STATE Construct a graph $\mathcal{G}^+$ with the filtered vertices and their associated links
    \STATE $\pi^+ \leftarrow $ GLAD-S($\mathcal{T}, \mathcal{G}^+, C, R$) \quad \quad \quad $\triangleright$ Call Algorithm \ref{algo:offline}
    \STATE Extract the graph layout $\pi^-$ from existing layout $\pi(t-1)$ with respect to the unfiltered vertices
    \STATE $\pi(t) \leftarrow \pi^+ \cup \pi^-$
    \RETURN $\pi(t)$
\end{algorithmic}
\end{algorithm}

\subsection{Adaptive Scheduling}

While the incremental strategy can achieve lightweight yet effective cost optimization, it can incur ``performance drifts" \cite{wang2020service} as time passes, since Algorithm \ref{algo:incremental} always checks for a local optimum over a part of the vertices.
The accumulation of performance loss over time may trap the system at a status where the advantage of incremental updates amortizes the penalty for migrating vertices.
In essence, this pertains to the tradeoff between the overhead and the benefit of a graph layout update. 
We therefore intend to find when such a status occurs and devise an adaptive scheduling algorithm for striking a balance on applying global optimization or incremental improvement.

Formally, given an existing graph layout $\pi(t-1)$ at time $t-1$ and an evolved data graph $\mathcal{G}(t)$ at time $t$, to update a layout $\pi(t)$ can either apply GLAD-S or GLAD-E.
Let $C^S(t)$ and $C^E(t)$ be the system cost achieved by these two ways\footnote{For the brevity of notations, we use $C^S(t)$ to simplify $C^S(\pi(t)|\mathcal{T}, \mathcal{G}(t))$, which means the system cost of the graph layout $\pi(t)$ generated by GLAD-S at time $t$. The same notation manner is used for GLAD-E.}, respectively, and the performance drift $f(t)$ is defined by the distance between them, \textit{i.e.} $|C^S(t)-C^E(t)|$.
Since GLAD-S commits to a convergence only if all edge pairs have been traversed at least once (\textit{cf.} Algorithm \ref{algo:offline} line 13), its searching space is obliged to be larger than that of GLAD-E.
This promises that the achieved cost of GLAD-S is not worse than that of GLAD-E, \textit{i.e.} $C^S(t) \leq C^E(t)$, and thus
\begin{align}
    f(t)=C^E(t)-C^S(t).
\end{align}

Assuming a threshold $\theta$ that quantifies the stringent cost budget required by Service Level Agreements (SLA), we recognize two statuses of the system: 1) when the accumulated performance drift $\sum_{t} f(t)$ meets SLA, \textit{i.e.} $\sum_{t} f(t) \leq \theta$, the service runs stable and we apply GLAD-E for incremental improvement, and 2) once $\sum_{t} f(t) > \theta$, the localized updates compensate for cost minimization and GLAD-S is supposed to be invoked.
The adaptive scheduling problem is to decide whether to switch the employed GLAD-E to GLAD-S at time $t$ such that the SLA $\theta$ is satisfied for a time as long as possible. 

A trivial solution for this problem is to record $f(t)$ for every time slot, and triggers GLAD-S as long as $\sum_{t} f(t) > \theta$.
However, at each time slot we can only invoke one algorithm for an update, meaning that it is infeasible to acquire $C^E(t)$ and $C^S(t)$ simultaneously, so for the value of $f(t)$.
Alternatively, we observe that $f(t)$ has an upper bound related to known parameters, as given in Theorem \ref{thm:bounded_performance_drift}.

\begin{algorithm}[t] 
\caption{GLAD-A: GLAD for Adaptive scheduling} 
\label{algo:scheduling} 
\begin{algorithmic}[1] % 1: show line numbers
    \REQUIRE ~~\\ % Input
    $\mathcal{G}(t)$: The data graph at time slot $t$ \\
    $\pi(t-1)$: The graph layout at time slot $t-1$\\
    $C(t-1)$: The system cost at time slot $t-1$ \\
    $f(t-1), f(t-2), \cdots, f(0)$: Historical performance drifts \\
    \ENSURE ~~\\ % Output
    $\Pi(t)$: The algorithm to be invoked at time slot $t$
    \STATE Approximate $f(t)$ according to Theorem \ref{thm:bounded_performance_drift}
    \IF{$\sum_t f(t) \leq \theta$}
        \STATE $\Pi(t) \leftarrow \ $ GLAD-E
    \ELSE
        \STATE $\Pi(t) \leftarrow \ $ GLAD-S
    \ENDIF
    \RETURN $\Pi(t)$
\end{algorithmic}
\end{algorithm}

\begin{theorem}[Bounded performance drift] \label{thm:bounded_performance_drift}
The performance drift $f(t)$ at time $t$ is upper bounded by $C(\pi(t-1)|\mathcal{G}(t))-C(t-1)$, where $C(\pi(t-1)|\mathcal{G}(t))$ is the cost of the unadjusted graph layout $\pi(t-1)$ over the evolved data graph $\mathcal{G}(t)$ and $C(t-1)$ is the system cost at time slot $t-1$.
\end{theorem}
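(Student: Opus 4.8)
The plan is to bound the performance drift $f(t)=C^E(t)-C^S(t)$ by sandwiching it between an upper bound on the GLAD-E cost $C^E(t)$ and a lower bound on the GLAD-S cost $C^S(t)$, using as the natural pivot the \emph{frozen} configuration in which the previous layout $\pi(t-1)$ is simply carried over onto the evolved graph $\mathcal{G}(t)$, whose cost is exactly $C(\pi(t-1)|\mathcal{G}(t))$. Since the preceding discussion already establishes $C^S(t)\le C^E(t)$, the drift is non-negative, and it suffices to prove $C^E(t)\le C(\pi(t-1)|\mathcal{G}(t))$ together with $C^S(t)\ge C(t-1)$; subtracting the second inequality from the first then yields the claimed bound directly.

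First I would establish the upper bound on $C^E(t)$. By construction, GLAD-E keeps every unfiltered vertex at its position in $\pi(t-1)$ and re-optimizes only the filtered vertices (those newly inserted or newly endowed with a cross-edge link) by warm-starting GLAD-S on the induced subgraph $\mathcal{G}^{+}$ from their carried-over placements. Because GLAD-S is a descent procedure whose every accepted iteration strictly lowers the cost and which terminates at a converged layout (Theorem~\ref{thm:convergence}), with each solved minimum $s$-$t$ cut returning the cost-minimizing assignment for the servers and clients it touches (Theorem~\ref{thm:correctness}), the re-optimized subgraph cost cannot exceed its frozen value. Glued to the untouched part $\pi^{-}$, the combined layout $\pi^{+}\cup\pi^{-}$ is therefore no more expensive than the frozen layout from which GLAD-E departs, which gives $C^E(t)\le C(\pi(t-1)|\mathcal{G}(t))$.

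The lower bound $C^S(t)\ge C(t-1)$ is where the real work lies. The idea is to exploit monotonicity: invoking the four-case categorization that precedes Algorithm~\ref{algo:incremental}, only vertex and cross-edge link \emph{insertions} can raise the objective, whereas deletions and same-server insertions leave it unchanged or reduce it. Since each of the four cost factors $C_U$, $C_P$, $C_T$, and $C_M$ is a sum of non-negative terms indexed over vertices and links, adding vertices or cross-edge links can only increase the cost of any \emph{fixed} layout, so the minimum over feasible layouts is non-decreasing as the graph accrues such insertions. Combining this with the fact that $\pi(t-1)$ is a converged, hence near-optimal, layout for $\mathcal{G}(t-1)$ attaining $C(t-1)$, the re-optimized cost $C^S(t)$ over the grown graph cannot fall below $C(t-1)$.

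The main obstacle I anticipate is precisely this lower-bound step, for two reasons. First, GLAD-S guarantees only a converged local optimum with the parameterized ratio of Theorem~\ref{thm:approximation} rather than the exact global optimum, so the phrase ``optimal at $t-1$'' must be replaced by the fixed-point characterization of a converged layout (no improving two-server cut exists) and carried consistently to time $t$. Second, the monotonicity argument is clean only in the insertion-dominated regime; to handle slots that also contain deletions I would argue separately that deletions contribute no positive term to the drift, since they only relax the objective in accordance with the categorization, and hence cannot violate the inequality, which keeps the bound valid under the paper's standing assumption that inter-slot evolution is slight. Once these two points are settled, assembling $f(t)=C^E(t)-C^S(t)\le C(\pi(t-1)|\mathcal{G}(t))-C(t-1)$ is immediate.
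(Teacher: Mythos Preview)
Your approach is essentially the same as the paper's: both rewrite $f(t)=\bigl[C^E(t)-C(t-1)\bigr]-\bigl[C^S(t)-C(t-1)\bigr]$, bound the first bracket above via $C^E(t)\le C(\pi(t-1)\mid\mathcal{G}(t))$ (GLAD-E cannot do worse than the frozen layout), and argue the second bracket is non-negative. The paper simply asserts that ``ideally'' $C^S(t)-C(t-1)=0$ because GLAD-S operates globally, and separately claims $f(t)=0$ under pure deletions; you instead try to justify $C^S(t)\ge C(t-1)$ through an explicit monotonicity argument, which is the more honest route but runs into exactly the obstacles you flag (local optimality of GLAD-S, and deletions potentially lowering the optimal cost). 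In short, your decomposition matches the paper's and your concerns about the lower-bound step are well-placed; the paper does not resolve them more rigorously than you do.
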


\begin{proof}
We observe that $f(t)$ can be transformed into
\begin{align}
    f(t) = [C^E(t)-C(t-1)] - [C^S(t)-C(t-1)], \label{eq:transform_ft}
\end{align}
where $C(t-1)$ is the system cost at the last time slot $t-1$.
If the topological changes across time slots $t-1$ and $t$ are only vertex/link deletions, GLAD-E and GLAD-S will have the same operations on the graph layout, thus $f(t)=0$.
Otherwise, their adjustments will vary.
Ideally, we expect calling GLAD-S can accommodate all cost augmentation introduced by topological changes, \textit{i.e.} $C^S(t-1)-C(t-1)=0$ since it operates globally.
Meanwhile, calling GLAD-E will at least improve the existing graph layout $\pi(t-1)$, \textit{i.e.} $C^E(t) \leq C(\pi(t-1)|\mathcal{G}(t))$.
With the above analysis, we have
\begin{align}
    f(t) \leq C(\pi(t-1)|\mathcal{G}(t)) - C(t-1).
\end{align}

This completes the proof.
\end{proof}

Theorem \ref{thm:bounded_performance_drift} provides a computable bound on $f(t)$, where $C(t-1)$ is a known value after time $t-1$ and $C(\pi(t-1)|\mathcal{G}(t))$ can be reckoned once $\pi(t-1)$ and $\mathcal{G}(t)$ are provided.
Specifically, for the unchanged part, we compute their cost directly according to $\pi(t-1)$; 
for the deleted vertex/link, we omit them since they do not augment the system cost;
for an inserted link, its incremental cost can be easily calculated given its terminals' locations;
for an inserted vertex, we place it to the edge server that induces the maximum cost, in order to complement the upper bound.

Using the bounded result in Theorem \ref{thm:bounded_performance_drift}, we can estimate $\sum_t f(t)$ and devise an \textbf{A}daptive scheduling algorithm GLAD-A to decide the invocations of GLAD-E and GLAD-S.
Algorithm \ref{algo:scheduling} presents its pseudocode.
With the approximated $f(t)$ and recorded historical performance drifts, we can directly calculate $\sum_t f(t)$ and compare it with the SLA $\theta$, which steers to a routine of either GLAD-E or GLAD-S.
The time complexity of GLAD-A is dominated by the approximation of $f(t)$, which requires checking all topological changes on the data graph and calculating the system cost in complexity of $O(V)$.
Since other operations in Algorithm \ref{algo:scheduling} are in constant complexity, its total time complexity is $O(V)$.

\section{Evaluation}
\label{sec:evaluation}

This section presents the results of our evaluations using a variety of realistic settings, demonstrating the proposed GLAD solution can achieve significantly better results than existing approaches.

\subsection{Experimental Setup}
\label{sec:experimental_setup}

\begin{figure*}[t] 
    \centering
    \begin{minipage}[t]{0.23\textwidth}
        \centering
        \includegraphics[height=3.1cm]{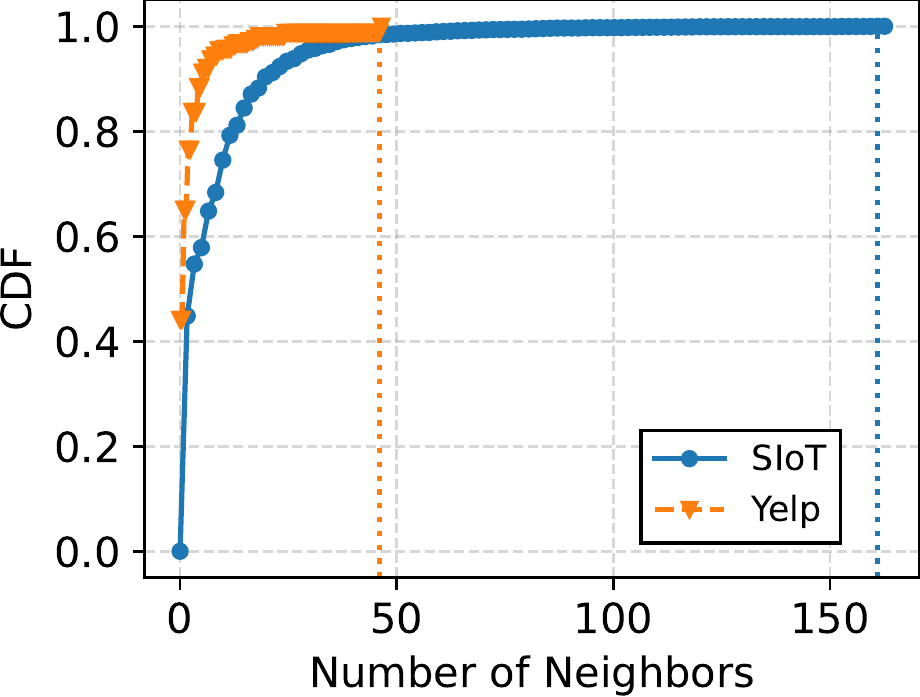}
        \caption{The CDF of the number of vertices' neighbors.}
        \label{fig:cdf_degree}
    \end{minipage}
    \quad
    \begin{minipage}[t]{0.23\textwidth}
        \centering
        \includegraphics[height=3.1cm]{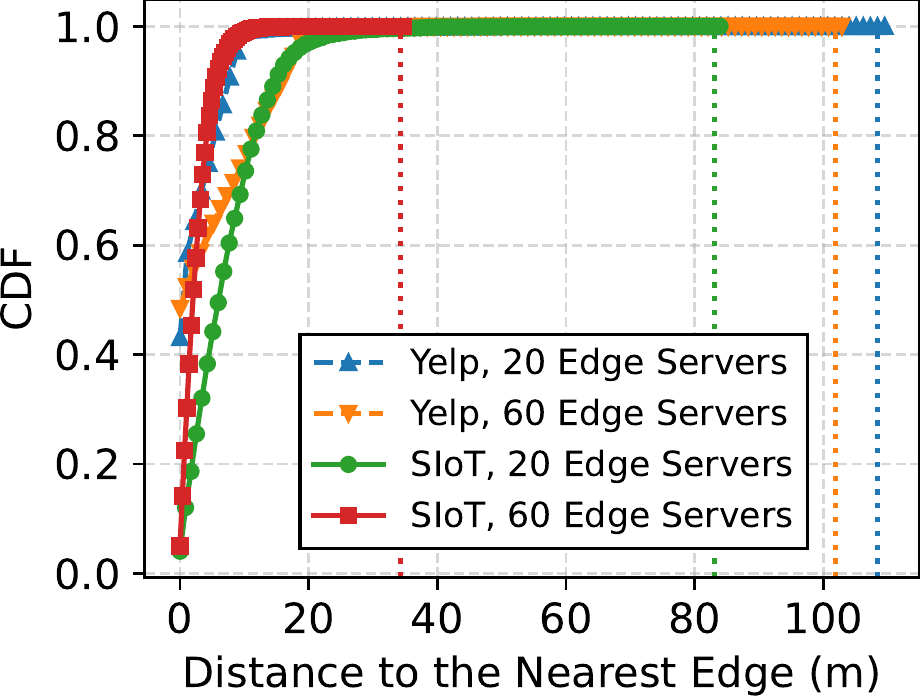}
        \caption{The CDF of the distance between clients to their nearest edges.}
        \label{fig:cdf_distance}
    \end{minipage}
    \quad
    \begin{minipage}[t]{0.23\textwidth}
        \centering
        \includegraphics[height=3.1cm]{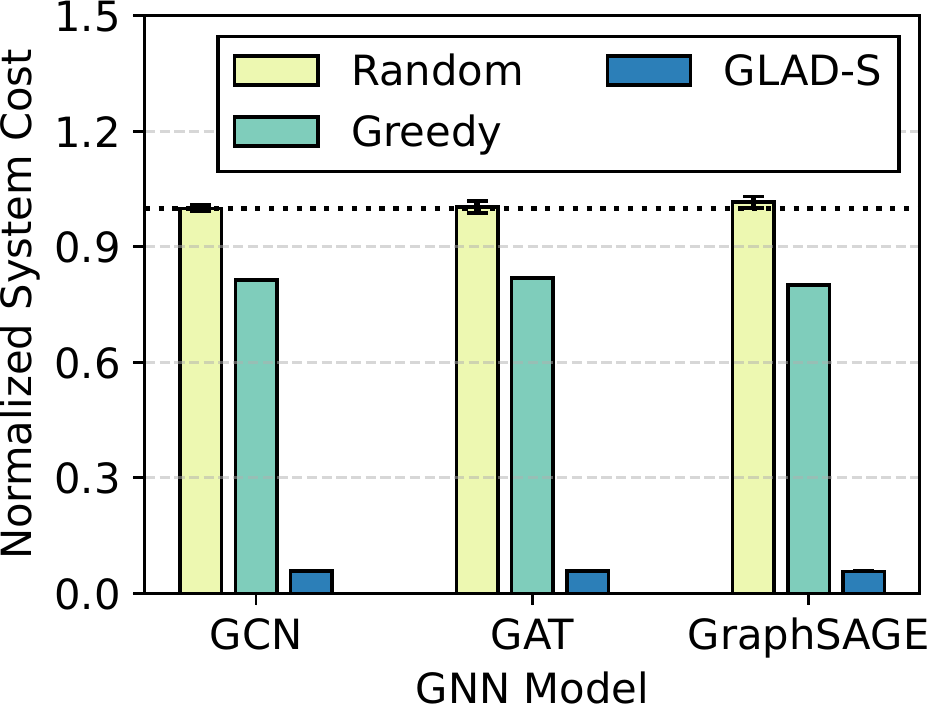}
        \caption{The achieved system cost of different GNN models on SIoT.}
        \label{fig:total_cost_siot}
    \end{minipage}
    \quad
    \begin{minipage}[t]{0.23\textwidth}
        \centering
        \includegraphics[height=3.1cm]{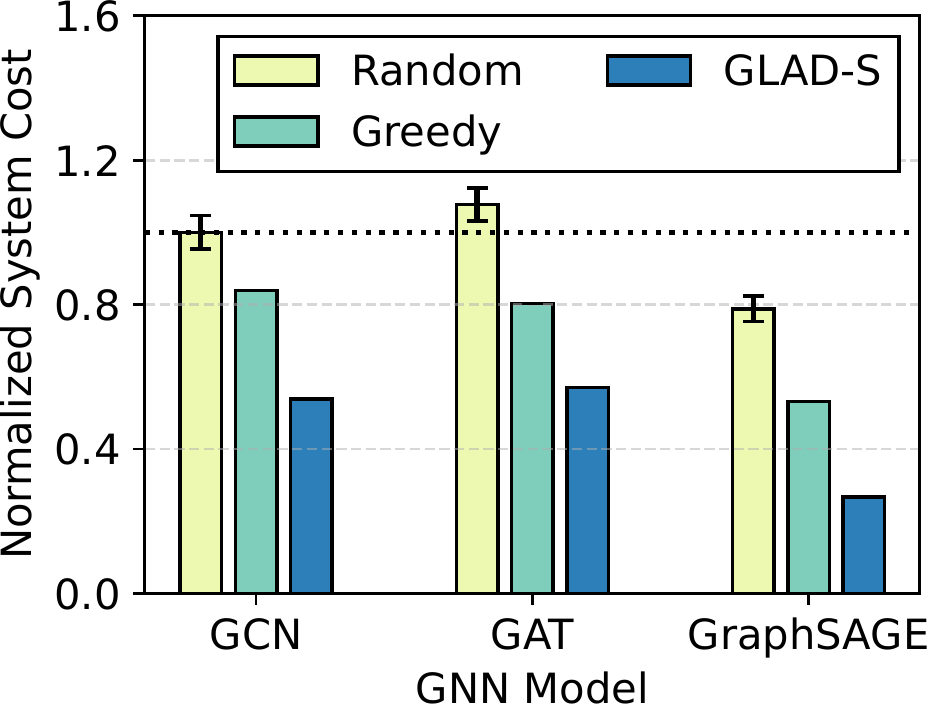}
        \caption{The achieved system cost of different GNN models on Yelp.}
        \label{fig:total_cost_yelp}
    \end{minipage}
\end{figure*}

\begin{table}[t]
    \caption{Specifications of used edge servers.}
    \label{tab:edge_servers}
    \centering
    \begin{tabular}{|c|c|c|c|} 
        \hline
        \textbf{Type} & \textbf{Processor} & \textbf{Mem.} &\textbf{Capability}  \\ \hline \hline
        \textit{A}    & 3.40GHz 8-Core Intel i7-6700 & 4GB    & Weak           \\ 
        \textit{B}    & 3.40GHz 8-Core Intel i7-6700 & 8GB    & Moderate       \\ 
        \textit{C}    & 3.70GHz 16-Core Xeon W-2145 & 32GB   & Powerful       \\
        \hline
    \end{tabular}
\end{table}

\textbf{Datasets.} 
Two realistic datasets are employed in our evaluation since they have been used in existing GNN-based services discussed in Sec. \ref{sec:gnn} and are the representative use cases with broad edge deployment \cite{zhong2020hybrid, dou2020enhancing, khanfor2020graph, jiang2022graph, gao2020deep}.
The first is SIoT \cite{marche2020exploit}, a graph of socially-coupled end devices collected in Santander, Spain. 
We randomly sample about half of the vertices from the complete dataset and obtain a graph consisting of 8001 vertices and 33509 links.
Each vertex has a 52-dimension feature that encodes its properties such as the device’s type, brand, and mobility, and is labeled as public/private depending on the corresponding device's physical ownership (by a person or an organization).
The GNN service over SIoT is to recognize the labels of clients, by predicting from their features and relationships \cite{hamrouni2022service, aljubairy2021towards}.

The second is Yelp, a graph randomly extracted from its complete backup \cite{rayana2015collective, yelpchi}, which collects 3912 customers' reviews with 4677 connections for a pool of hotels and restaurants in Chicago.
A vertex in Yelp is a client's review comment, represented by a 100-dimension Word2Vec \cite{mikolov2013efficient} vector, and a link between two vertices means the two reviews are posted by the same user.
The reviews can be categorized into two types: spam comments that are fake and need filtering, and normal comments that should be reserved.
Their labels are obtained by Yelp's official review filters \cite{mukherjee2013yelp} and are provided in the original dataset.
The GNN service for Yelp is to identify the spammers, according to the graph-based fraud detection applications \cite{dou2020enhancing, liu2020alleviating}.
Fig. \ref{fig:cdf_degree} depicts the Cumulative Distribution Function (CDF) of the number of vertices' neighbors in the graphs.
In SIoT, the degree distribution exhibits a long-tail property, whereas Yelp is more sparse and has many isolated vertices.
This indicates the used graphs are in distinct characteristics and we will show our approach can perform well in both datasets.

To obtain the locations of edge servers, we follow the settings in \cite{wang2019edge} and perform k-means \cite{lloyd1982least} on the clients' spatial coordinates to obtain clustering pivots, which act as the edge server locations.
Since the Yelp dataset is absent in clients' geographical locations, we borrow the idea from ``workload composition" \cite{kolosov2020benchmarking} and use the positional information of New York taxis \cite{NYC_taxi} for synthesis.
Fig. \ref{fig:cdf_distance} displays the CDF of the distance between clients to their nearest edge servers, when the number of edge servers is set as 20 and 60, respectively.
The distributions for SIoT and Yelp are different: some clients in Yelp are solitary and far from the edge servers, meaning a potentially costly data uploading process; clients in SIoT are more compact to access the edge network.
Besides, the larger the number of edge servers, the smaller the average access distance is.

\textbf{Parameters.} Our experiments are carried out with GCN \cite{kipf2016semi}, GAT \cite{velivckovic2017graph}, and GraphSAGE \cite{hamilton2017inductive}, three representative GNN models that have been widely adopted in edge scenarios \cite{zhong2020hybrid, dou2020enhancing, guo2019attention}.
Their formal definitions of layer-wise execution semantics have been explained in Sec. \ref{sec:gnn}.
All the models are implemented in two layers using the instances from Pytorch Geometric \cite{fey2019fast} model zoo and are trained prior to deployment.
The input feature size of the models are 52 and 100 for SIoT and Yelp, respectively, while the number of hidden units is fixed at 16 and the output size is 2 (binary classification).
For heterogeneous edge resources, we deploy three types of edge servers, labeled as type \textit{A} (weak), \textit{B} (moderate), and \textit{C} (powerful), respectively, and their specifications are listed in Table \ref{tab:edge_servers}.
To gauge the parameters in computation cost $C_P$, \textit{i.e.} $\alpha_i$, $\beta_i$, and $\gamma_i$, we profile the operator-wise execution time on all types of machines by passing GNN models through the above-mentioned datasets.
All the background loads are turned off during the profiling and each benchmark is repeated 50 times to obtain stable results.
For the unit cost of data collection $\mu_{vi}$, we calculate them by multiplying a factor with the geographical distances between client $v$ and edge $i$ \cite{jiao2014multi}, also the same way is applied to obtain $\tau_{ij}$ for cross-edge traffic at edge pair $\langle i , j \rangle$.
For the maintenance cost, we generate $\rho_{vi}$ and $\epsilon_i$ using a Gaussian process, following the fact that the hourly electricity prices in some US cities obey the Gaussian distribution \cite{Qureshi2009cutting}.
By default, $R$ is set to 3, representing that a valid convergence is checked when the cost cannot be further reduced in three successive searching attempts, while we will also inspect the effect of varying $R$ in our evaluation.

\begin{figure*}[t] 
    \centering
    \begin{minipage}[t]{0.23\textwidth}
        \centering
        \includegraphics[height=3.0cm]{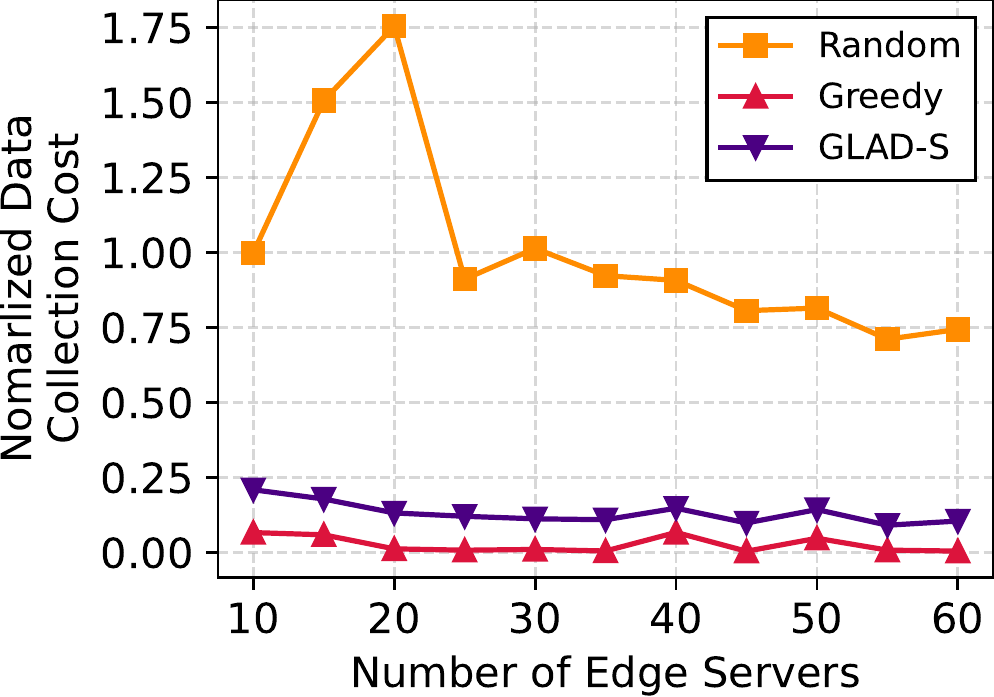}
        \caption{Data collection cost on Yelp with varying number of edges.}
        \label{fig:uploading_cost}
    \end{minipage}
    \quad
    \begin{minipage}[t]{0.23\textwidth}
        \centering
        \includegraphics[height=3.0cm]{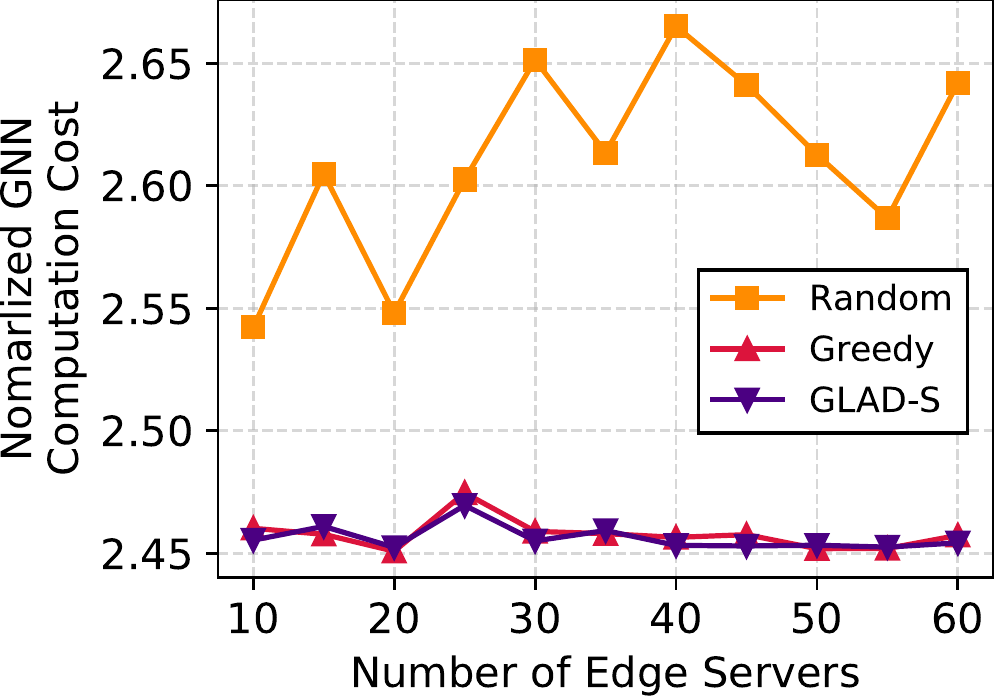}
        \caption{GNN computation cost on Yelp with varying number of edges.}
        \label{fig:computation_cost}
    \end{minipage}
    \quad
    \begin{minipage}[t]{0.23\textwidth}
        \centering
        \includegraphics[height=3.0cm]{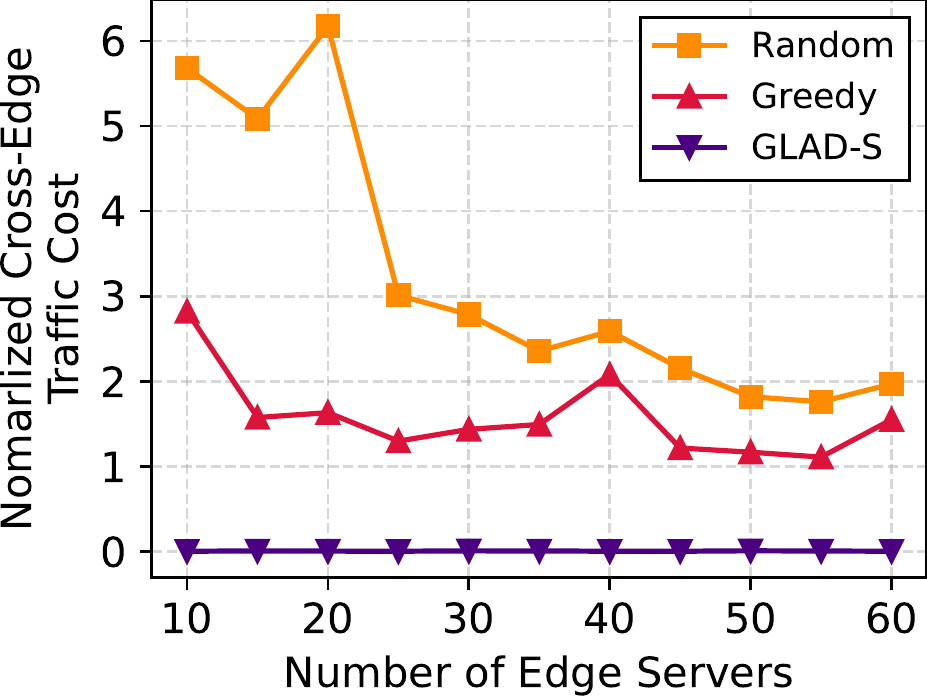}as
        \caption{Cross-edge traffic cost on Yelp with varying number of edges.}
        \label{fig:traffic_cost}
    \end{minipage}
    \quad
    \begin{minipage}[t]{0.23\textwidth}
        \centering
        \includegraphics[height=3.0cm]{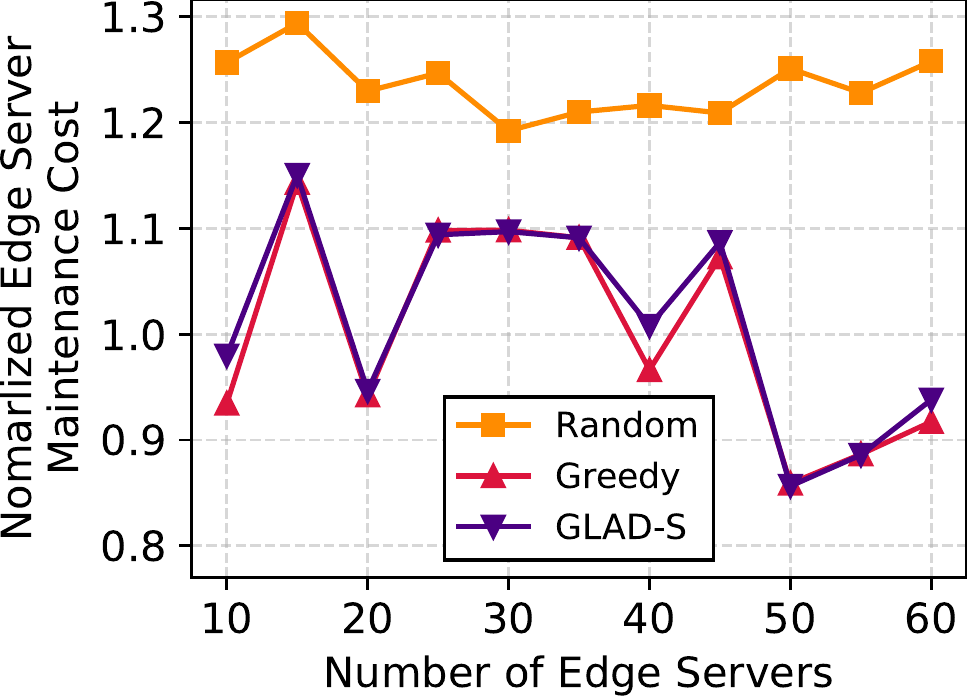}
        \caption{Maintenance cost on Yelp with varying number of edges.}
        \label{fig:maintenance_cost}
    \end{minipage}
\end{figure*}

\textbf{Methodology.}
We implement GLAD solutions with a state-of-the-art max-flow algorithm \cite{Orlin2013max} to solve the \textit{s-t} cut series.
All the experimental results are recorded as the mean value of 20 repetitions.
To emulate heterogeneity, we label each edge server in equal proportion as one of the three types of machines in Table \ref{tab:edge_servers}.
Specifically, if the number of edge servers cannot be divided into three groups exactly, we assign the remainders in a priority of $A$, $B$, and $C$.
For instance, if we simulate twenty edge servers, the system will involve seven machines of type $A$, seven of $B$, and six of $C$.
GLAD does not touch upon the accuracy issues as the execution semantics of GNN models are not modified, and the raw graph data as well as the model parameters are fully reserved.
Hence, the model accuracy only depends on the input GNN model itself, and is irrelevant to our proposed cost-optimized graph layout scheduling solution.
Our evaluation thus concentrates particularly on the cost metric, including the total cost and detailed cost factors, and conducts experiments for static and dynamic input graphs, respectively.

In the static setting, we compare GLAD-S with two baselines: 1) \textit{Random}, which generates a graph layout with each client assigned to an arbitrary edge server, and 2) \textit{Greedy}, which places clients to the edge servers that greedily minimizes the cost of data collection, GNN computation, and edge server maintenance.
These two heuristics are yet widely applied in many data placement solutions \cite{udayakumaran2006dynamic, azizi2020grvmp, xie2019efficient}.
Traditional data placement techniques for CNN and RNN are not qualified as baselines because these models' input data (\textit{e.g.} image, voice signal) are typically from a single source (camera, microphone) for individual queries, whereas GNN's input is graph data distributed among multiple devices.
FL-oriented methods are also beyond the scope since they aim at a privacy-preserving distributed training with centralized aggregators (usually cloud servers), while our considered DGPE conceives a distributed inference with graph data availably shared among distributed edge servers.
Such substantial distinction engenders entirely different optimization principles for their data placement strategies, and thus these methods and GLAD are not at the same comparable level.

In the dynamic setting, we focus on examining our proposed approaches against graph evolution in a time window.
To generate traces of topological graph changes, for each time slot we first select a percentage of changed links and calculate the corresponding number of links based on $|\mathcal{E}|$, \textit{i.e.} the amount of links in the input data graph.
Next, we produce an integer following the Gaussian distribution with the link number obtained above as the mean value and a half of that as the standard deviation.
Finally, using this integer as a beacon, we randomly select a number of vertices in the data graph, and uniformly generate an insertion or deletion of links between them.
The same way is applied for generating a number of vertex insertions/deletions.
By doing so, we can obtain a series of topological changes on vertices and links that simulates graph evolution.
To demonstrate the performance of GLAD-E and how adaptive GLAD-A can be, for each time slot we compare four approaches: 1) \textit{No Adjustment}, where neither algorithm is invoked during the runtime, 2) \textit{Greedy}, which utilizes the greedy approach in static graph setting for online graph layout optimization, 3) \textit{GLAD-E}, which calls GLAD-E for incremental graph layout scheduling for every time slot, and 4) \textit{Adaptive} that applies GLAD-A to selectively decide using either GLAD-E or GLAD-S.

\subsection{Performance Comparison}

We first compare the proposed GLAD-S with Random and Greedy baselines by investigating the total system cost with static input graphs.
We select 60 edge servers, with the three types of machines in equal proportion.
Fig. \ref{fig:total_cost_siot} shows the normalized system cost obtained by these approaches on different GNN models over the SIoT dataset.
As we witness, GLAD consistently outperforms Random and Greedy, achieving at most 94.1\%, 94.4\%, and 95.8\% cost-saving for GCN, GAT, and GraphSAGE, respectively.
Traversing horizontally, GraphSAGE takes a lower cost budget than that of GCN and GAT whichever approach is applied, meaning that GraphSAGE requires a lighter system workload.
The results for Yelp in Fig. \ref{fig:total_cost_yelp} are analogous to that in SIoT (Fig. \ref{fig:total_cost_siot}), where GLAD always produces the smallest system cost, demonstrating that the proposed approach generalizes well and can handle different types of input graphs and GNN models.

We take one step closer to explore how each cost factor varies with different numbers of edge servers.
Specifically, we run GAT over Yelp, and record the costs of data collection, GNN computation, cross-edge traffic, and edge server maintenance, for different approaches.
To compare the contribution of different cost factors, all results are normalized based on the value of Random's data collection cost with 10 edge servers.
Fig. \ref{fig:uploading_cost} shows the results for data collection cost $C_U$.
Random and Greedy exhibit the ceiling and floor for all cases, respectively.
This is because the graph layout in Random is irrespective of clients' spatial distribution, while Greedy always prefers the physically nearest edge servers for clients.
GLAD-S's results locate between the other two baselines, and are much smaller than Random's while closer to Greedy's.
As the number of edge servers increases, the geographical distances between clients and edges diminish, and hence the data collection cost of GLAD gradually shrinks.
Fig. \ref{fig:computation_cost} and Fig. \ref{fig:maintenance_cost} respectively describe the GNN computation cost $C_P$ and the edge server maintenance cost $C_M$, where GLAD-S takes clearly smaller costs than Random.
This accredits GLAD's heterogeneity awareness, and their performance gap implicates the optimization effect.
Concretely, among a cluster of edge servers, those with higher computing capability allow lower processing overhead under the same workload level, and therefore will be assigned with more vertices such that the total system cost is retrenched.
The Greedy baseline's optimization also takes computation cost into consideration and therefore displays similar records as GLAD-S.
Nevertheless, their curves are not fully overlapped, since GLAD-S's objective further accommodates the traffic cost between edge servers and needs to jointly reconcile all cost factors.
As shown in Fig. \ref{fig:traffic_cost}, GLAD-S significantly outperforms the other two baselines by lying at the bottom in the dimension of cross-edge traffic cost $C_T$.
The much smaller values imply that GLAD-S places the data graph's links to a small number of edge servers, while alleviating inter-server transmission as much as possible.
Across all the four figures, we note that the cross-edge traffic cost contributes a majority of the total system cost, for which GLAD-S' optimization accomplishes superior performance improvement over other baselines.

\begin{figure}[t] 
    \centering
    \begin{minipage}[t]{0.23\textwidth}
        \centering
        \includegraphics[height=3.0cm]{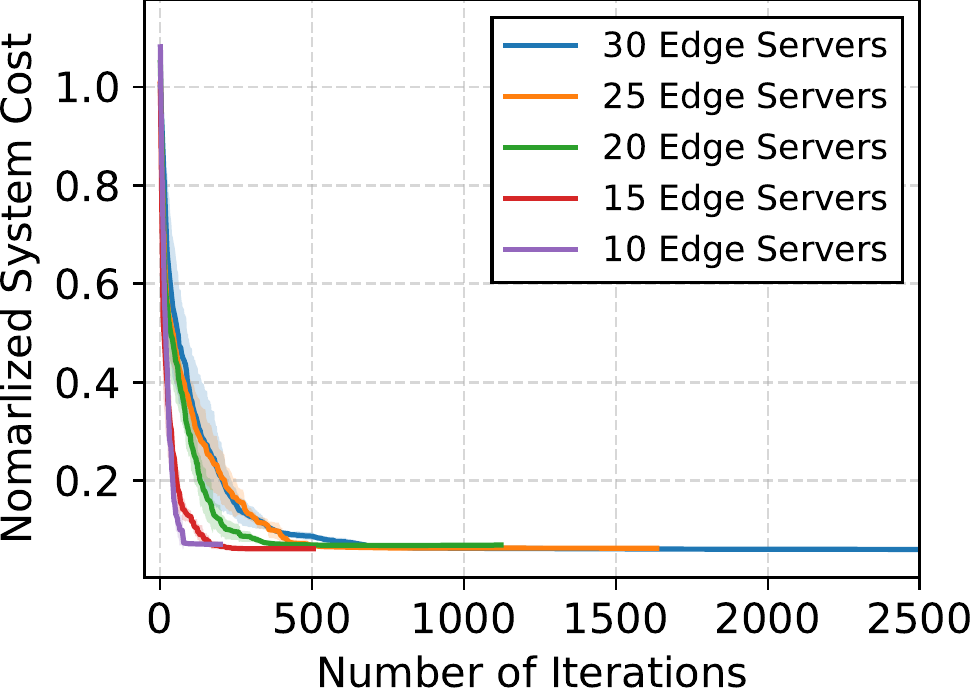}
        \caption{The cost convergence of GLAD-S on SIoT.}
        \label{fig:convergence_siot}
    \end{minipage}
    \quad
    \begin{minipage}[t]{0.23\textwidth}
        \centering
        \includegraphics[height=3.0cm]{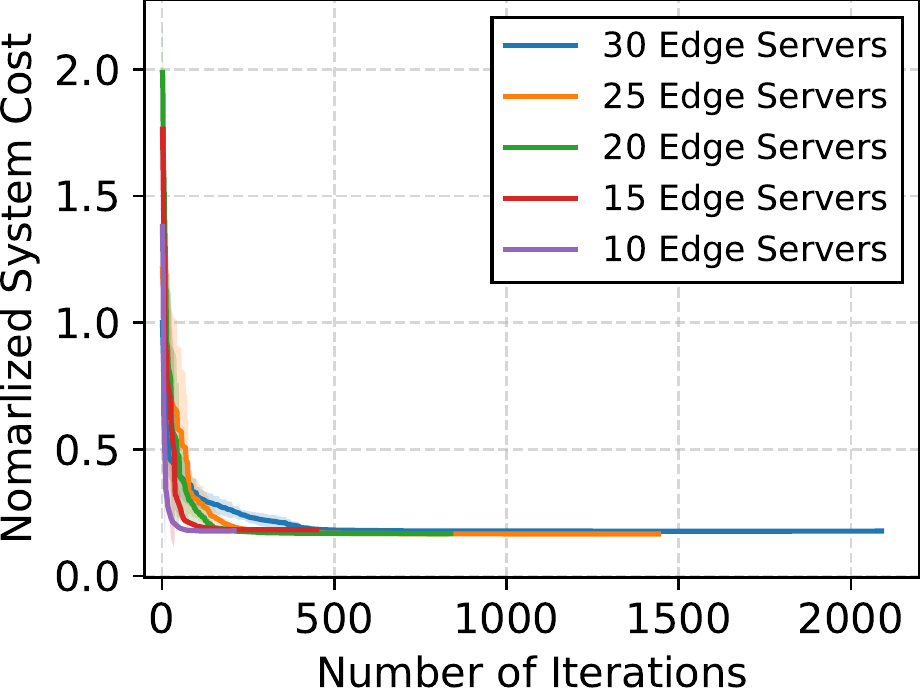}
        \caption{The cost convergence of GLAD-S on Yelp.}
        \label{fig:convergence_yelp}
    \end{minipage}
\end{figure}

\subsection{Convergence}
Fig. \ref{fig:convergence_siot} and Fig. \ref{fig:convergence_yelp} illustrate the system costs after each iteration of GLAD-S, varying the number of edge servers in DGPE.
GraphSAGE is employed to infer over SIoT and Yelp, respectively.
From the results we have three observations.
First, as iterations go ahead, all curves continuously decrease the total costs, pushing system performance to sweeter positions.
The effectiveness and convergence of GLAD-S are thus corroborated, given any number of edge servers.
Second, among all cases for both SIoT and Yelp, we find that the algorithm converges fast, where the system cost declines exponentially.
Third, the most cost diminution is achieved in the first few iterations, especially for Yelp in Fig. \ref{fig:convergence_yelp}.
The performance improvement shows a marginally decrement effect, in accordance with the objective function’s submodularity.
This encourages an early-exit strategy in practical deployment, where one can even terminate the algorithm ahead of actual convergence to cut down the scheduling time, since the optimization effect of the first few iterations is yet sufficient for some applications.

\subsection{Adaptability}
\label{sec:adaptability}
Fig. \ref{fig:adaptability} depicts the results for dynamic graph evolution using GAT over Yelp in 200 time slots.
With 10 edge servers employed, we investigate four approaches: No Adjustment, Greedy, GLAD-E, and Adaptive.
We perform GLAD-S to obtain an initial graph layout, set the percentage of topological changes within the whole graph as 1\%, and generate link insertion/deletion at each time slot following the method in Sec. \ref{sec:experimental_setup}.
As recorded in the top subfigure, the number of links in the input data graph continuously fluctuates over time.
Referring to the original number of links at the initial moment (the dashed line), the graph size changes in a relatively small magnitude, which aligns with the real evolution in GNN's input data graph.
Following such evolution, the middle subfigure plots the decisions of our proposed GLAD-A algorithm on calling either GLAD-E or GLAD-S.
We observe from the invocations that GLAD-S is only applied for a few time slots, which significantly saves the scheduling overhead of global layout adjustment.
The bottom subfigure shows the achieved system cost during the time window for all approaches, where the recorded values are normalized with respect to the system cost at the initial time slot.
As we can see, both GLAD-S and GLAD-E consistently outperform No Adjustment and Greedy, confirming the benefit brought by graph layout optimization.
However, we also find an evident performance drift between their trajectories, where GLAD-E gradually compensate for its advantages in agile adjustment.
GLAD-A bridges this gap by dynamically switching between incremental updates and global optimization, and therefore achieving lower costs than the way applying GLAD-E only.
Overall, GLAD-E can remarkably lower the system cost during the runtime and GLAD-A can further refine the performance by adapting graph layout optimization agilely to dynamic graph evolution.

\begin{figure}[t]
    \centering
    \includegraphics[width=0.95\linewidth]{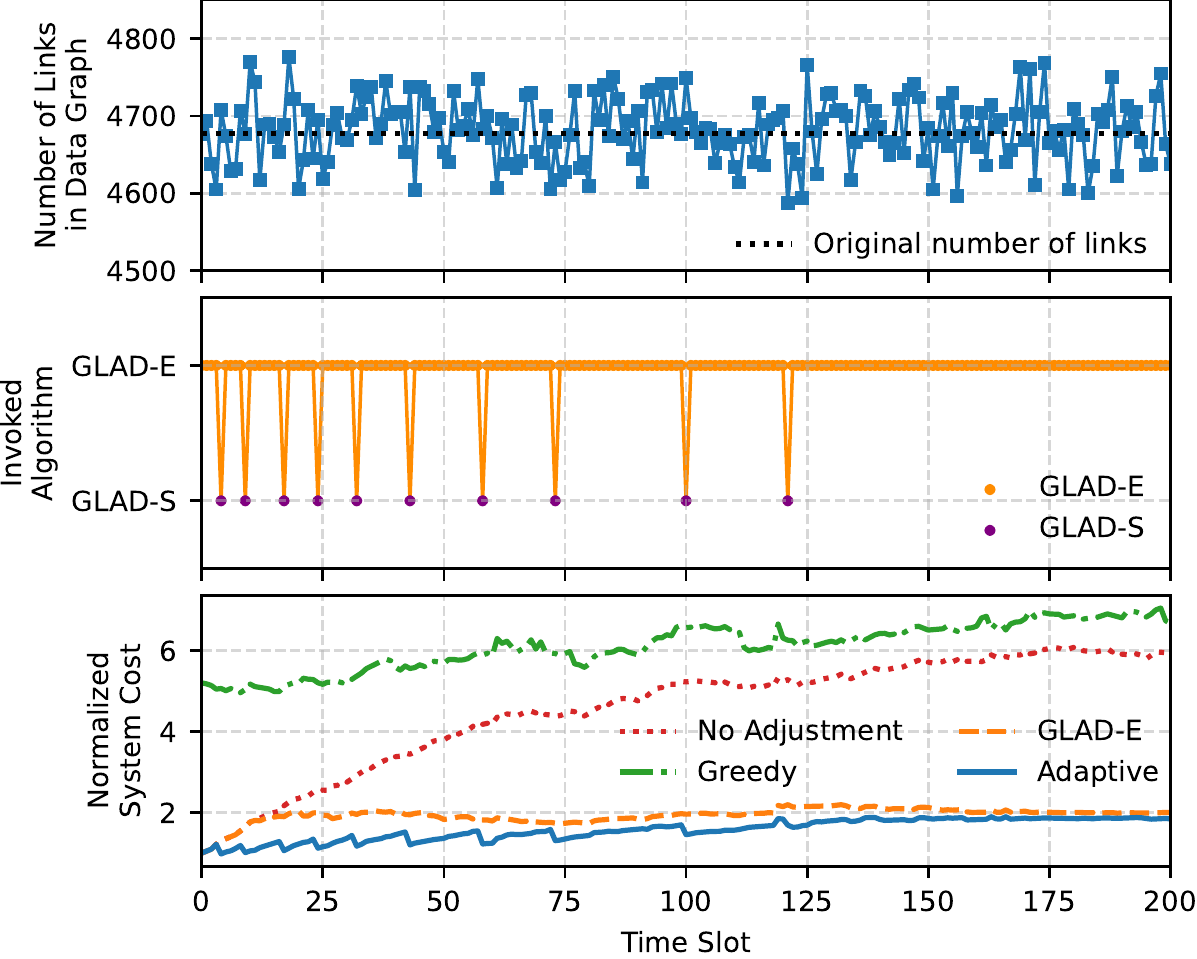}
    \caption{Performance fluctuation of different approaches with graph evolution. The proposed adaptive approach can dynamically adjust the invocations of GLAD-E and GLAD-S for lower system costs.}
    \label{fig:adaptability}
\end{figure}

\begin{figure*}[t] 
    \centering
    \begin{minipage}[t]{0.23\textwidth}
        \centering
        \includegraphics[height=3.1cm]{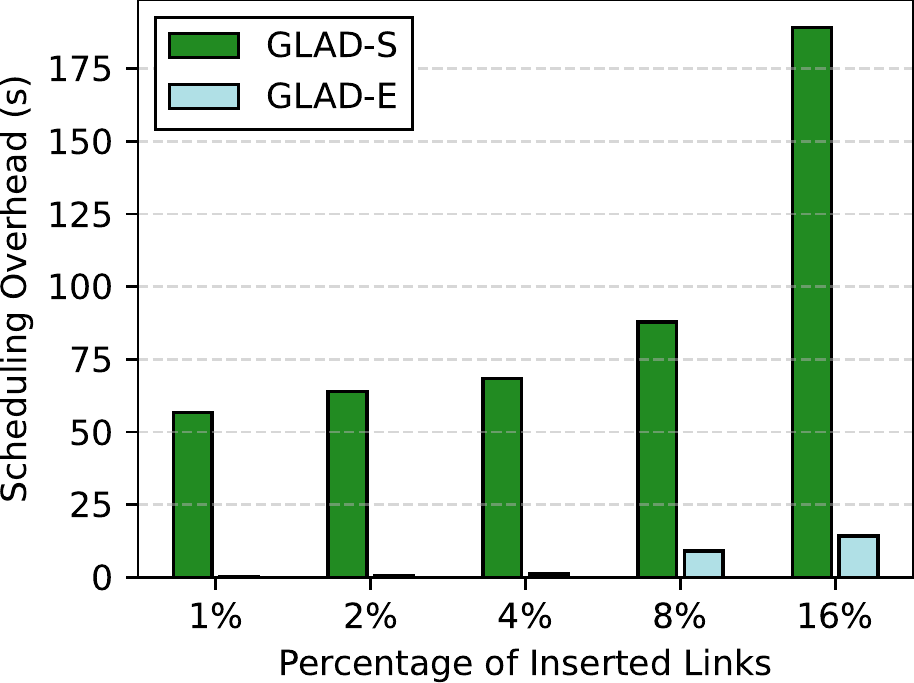}
        \caption{Scheduling overhead of GLAD-S and GLAD-E with varying link insertions on SIoT.}
        \label{fig:overhead_siot}
    \end{minipage}
    \quad
    \begin{minipage}[t]{0.23\textwidth}
        \centering
        \includegraphics[height=3.1cm]{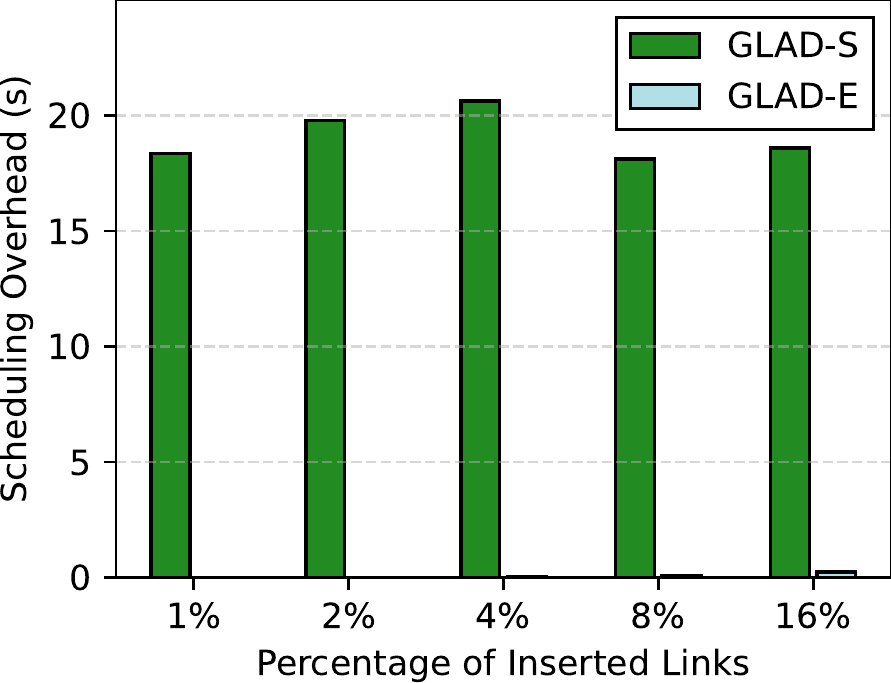}
        \caption{Scheduling overhead of GLAD-S and GLAD-E  with varying link insertions on Yelp.}
        \label{fig:overhead_yelp}
    \end{minipage}
    \quad
        \begin{minipage}[t]{0.23\textwidth}
        \centering
        \includegraphics[height=3.1cm]{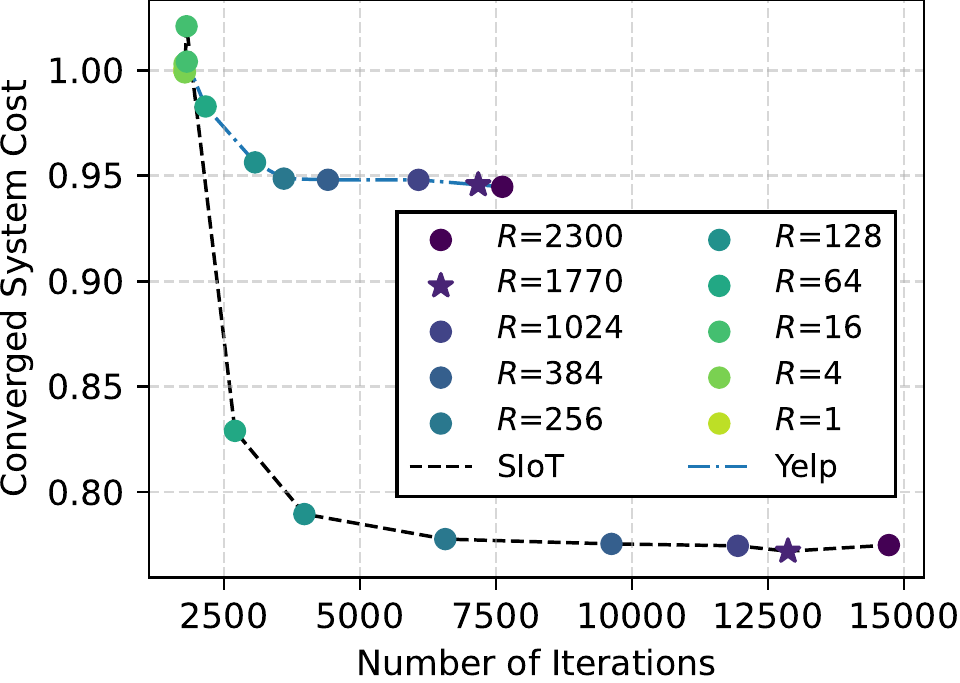}
        \caption{Converged system cost and the number of iterations of GLAD-S with varying $R$.}
        \label{fig:sensitivity_r}
    \end{minipage}
    \quad
    \begin{minipage}[t]{0.23\textwidth}
        \centering
        \includegraphics[height=3.1cm]{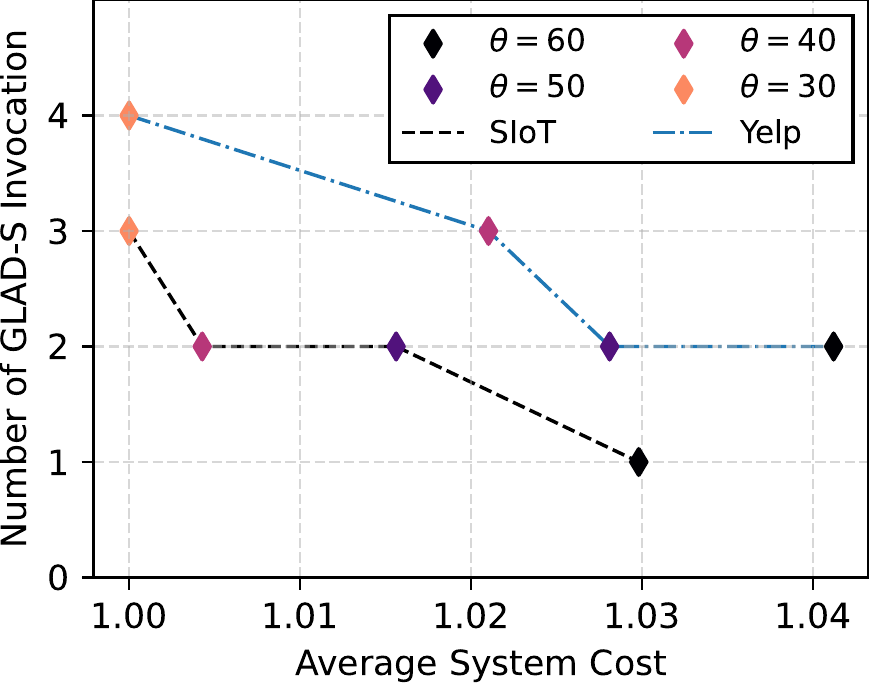}
        \caption{Average system cost in 200 time slots and the number of GLAD-S invocations with varying $\theta$.}
        \label{fig:sensitivity_theta}
    \end{minipage}
\end{figure*}

\subsection{Scheduling Overhead}
We further examine the scheduling overhead of GLAD-S and GLAD-E under the same experimental setup in Sec. \ref{sec:adaptability}.
Specifically, we vary the percentage of inserted links, generate a corresponding number of links to the graph, and record the scheduling time of two algorithms.
Fig. \ref{fig:overhead_siot} and Fig. \ref{fig:overhead_yelp} show the measurements for SIoT and Yelp, respectively.
In all cases, GLAD-E takes a much smaller overhead than GLAD-S, indicating a significant reduction in graph cut operations in incremental graph layout optimization.
Especially for Yelp in Fig. \ref{fig:overhead_yelp}, GLAD-E needs at most 0.3s (for 16\% link insertion) to finish a graph layout optimization.
Combining the results in Fig. \ref{fig:adaptability} evidences the proposed GLAD-E's lightweight property and scheduling effectiveness.
As the number of inserted links increases, GLAD-E spends gradually higher time for both two datasets, but always achieves a $<$10\% overhead compared to GLAD-S, which evidences the lightweight property of incremental improvement.
Nonetheless, given that the time slot for topological changes in the data graph is usually at the scale of minutes, all the algorithms in the GLAD family are practical for realistic deployment.

\subsection{Sensitivity}
\label{sec:sensitivity}
The last experiment studies the sensitivity of parameters ($R$ and $\theta$) used in our algorithms.
We first focus on $R$ in GLAD-S, running GAT over 60 edge servers under static graph settings.
Fig. \ref{fig:sensitivity_r} records the converged system cost and the number of experienced iterations of GLAD-S with varying $R$ for two datasets.
As $R$ increases, the system cost of the graph layout generated by GLAD-S first rapidly decreases and then gradually converges.
This attributes that a larger $R$ claims more convergence checking rounds across iterative graph cuts, which leads to a larger searching space and hence finds a better graph layout result.
In particular, when $R=\frac{|\mathcal{D}|(|\mathcal{D}|-1)}{2}$ (marked as a star in Fig. \ref{fig:sensitivity_r}), the achieved system cost can reach the converged value, suggesting that such configuration can steer to the local optimum.
Nevertheless, the gain in system cost improvement comes at a price.
Traversing the horizontal axis in Fig. \ref{fig:sensitivity_r} reveals that magnifying $R$ means much more iteration times, which gives rise to longer scheduling time.
This implies that the selection of $R$ tightly relates to the tradeoff between the quality of obtained graph layout and scheduling overhead.
In practical deployment, we can strike a balance between them by configuring a moderate value for $R$, \textit{e.g.} $R=256$ for SIoT, which can attain a majority of cost reduction with relatively fewer iterations.

We next investigate the impact of $\theta$ in GLAD-A, under dynamic graph settings. 
Fig. \ref{fig:sensitivity_theta} shows the number of GLAD-S invocations and the average system cost in 200 time slots with varying $\theta$ for two datasets.
As $\theta$ increases, the tolerance on QoS expands, and hence calling GLAD-S for global graph layout adjustment appears fewer times.
For instance, in SIoT, the adaptive strategy with $\theta=60$ will only invoke GLAD-S once within the time window, indicating that it resorts to GLAD-E and update the graph layout in an incremental manner for almost all time slots.
From the axis of average system cost, a smaller $\theta$ clearly leads to a smaller result, demonstrating the benefit of global layout optimization (using GLAD-S).
Analogous to $R$, the configuration of $\theta$ also involves the overhead-outcome tradeoff, since the number of GLAD-S invocations directly concerns the scheduling time.
Network operators can accordingly regulate $\theta$ to tune the preference between system cost and scheduling overhead.

\section{Conclusion and Future Work}
\label{sec:conclusion}

In this paper, we build a novel modeling framework to characterize cost optimization in edge-enabled distributed GNN processing, which is generalizable and compatible with a wide range of objectives.
Leveraging the properties of the system cost function, we derive an effective graph-cut based algorithm with theoretical performance guaranteed.
We further develop an incremental graph layout improvement strategy and an adaptive scheduling algorithm to achieve lightweight yet efficient cost minimization under dynamic graph evolution.
Extensive evaluations corroborate the superiority of the proposed approach, as well as its convergence and adaptability.
As GNNs have received growing attention and adoption in edge computing scenarios, the proposed framework can serve as a foundation for future analysis and optimization on specific GNN-driven applications.
In the future, we will further study extended DGPE cases where clients can replicate their vertices data across multiple edge servers, and investigate privacy-preserved distributed GNN training under the FL paradigm.

\appendices

\section{Proof of Theorem \ref{thm:correctness}}
\begin{proof}
We prove by contradiction.
Assuming the cost obtained by minimum \textit{s-t} cut on $\mathcal{A}(i,j)$ is $c$ and the minimum cost for the clients and edge servers in $\mathcal{A}(i,j)$ is $c^*$, we have $c > c^*$.
According to the mapping policy in Eq. (\ref{eq:mapping}), we can use the optimal placement decision of $c^*$ to construct a cut set in $\mathcal{A}(i,j)$, which is supposed to be the set of edge cuts with minimum weights.
This subsequently contradicts the fact that $c$ is the cost of the minimum \textit{s-t} cut on $\mathcal{A}(i,j)$.
Hence, $c = c^*$ and Theorem \ref{thm:correctness} follows.
\end{proof}

\section{Proof of Theorem \ref{thm:approximation}}

\begin{proof}
Select an arbitrary edge server $i \in \mathcal{D}$ and define the set of clients that are assigned to $i$ in the global optimal placement $\pi^*$ as $\mathcal{V}_i$.
From the placement $\pi$ produced by GLAD-S, we generate a new placement $\pi'$ by assigning the clients in $\mathcal{V}_i$ to edge server $i$ and keep other clients located at their original edges.
Since $\pi$ is the optimal solution within GLAD-S's searching space, switching from $\pi$ to $\pi'$ will increase the total cost, \textit{i.e.}
\begin{align}
    C(\pi) \leq C(\pi'). \label{eq:cost_local_optimum}
\end{align}

We now group all links in $\mathcal{G}$ into three sets in terms of $\mathcal{V}_i$, namely the internal set $\mathcal{E}^{\text{in}}_i$, external set $\mathcal{E}^{\text{ex}}_i$, and the cross set $\mathcal{E}^{\text{cr}}_i$:
\begin{align}
    \mathcal{E}^{\text{in}}_i &= \{\langle u, v\rangle \in \mathcal{E}| u\in \mathcal{V}_i, v \in \mathcal{V}_i\}, \\ 
    \mathcal{E}^{\text{ex}}_i &= \{\langle u, v\rangle \in \mathcal{E}| u\notin \mathcal{V}_i, v \notin \mathcal{V}_i\}, \\
    \mathcal{E}^{\text{cr}}_i &= \{\langle u, v\rangle \in \mathcal{E}| u\in \mathcal{V}_i, v \notin \mathcal{V}_i\}.
\end{align}

This consequently partition the data graph $\mathcal{G}$ into three subgraphs:
\begin{align}
    \mathcal{G}^{\text{in}}_i &= (\mathcal{V}_i, \mathcal{E}^{\text{in}}_i), \\
    \mathcal{G}^{\text{ex}}_i &= (\mathcal{V} \setminus \mathcal{V}_i,  \mathcal{E}^{\text{ex}}_i), \\
    \mathcal{G}^{\text{cr}}_i &= (\emptyset, \mathcal{E}^{\text{cr}}_i).
\end{align}

Recall that in Eq. (\ref{eq:cost_three}) the total cost can be divided into three components, where $C_2$ is the quadratic term that depends on the placement of client pairs, $C_1$ is the linear term with respect to the decision for each client, and $C_0$ is the one-shot cost for all edge servers.
In particular, we are interested in $C_1$ and $C_2$ since they are related to the decision variables, and we denote $C^{+} = C_1 + C_2$ for the brevity of notations.

Centering around the three subgraphs above, we highlight the following relations with respect to $C^{+}$:
\begin{align}
    C^{+}(\pi'|\mathcal{G}^{\text{in}}_i) &= C^{+}(\pi^*|\mathcal{G}^{\text{in}}_i), \label{eq:relation_internal} \\
    C^{+}(\pi'|\mathcal{G}^{\text{ex}}_i) &= C^{+}(\pi|\mathcal{G}^{\text{ex}}_i), \label{eq:relation_external}\\
    C^{+}(\pi'|\mathcal{G}^{\text{cr}}_i) &\leq \lambda C^{+}(\pi^*|\mathcal{G}^{\text{cr}}_i). \label{eq:relation_cross}
\end{align}
Eq. (\ref{eq:relation_internal}) holds because the placement of vertices in $\mathcal{G}^{\text{in}}_i$ is the same for both $\pi'$ and $\pi^*$.
Similarly, Eq. (\ref{eq:relation_external}) follows as $\mathcal{G}^{\text{ex}}_i$ is the same for both $\pi'$ and $\pi$.
Eq. (\ref{eq:relation_cross}) is induced by $\frac{\tau_{u'v'}}{\tau_{u^{*}v^{*}}} \leq \lambda $, where $\lambda = \frac{\max_{i,j\in\mathcal{D}}\tau_{ij}}{\min_{i,j\in\mathcal{D}}\tau_{ij}}$.
Here we use $\tau_{u'v'}$ to represent the traffic cost between the edge servers that locate $u$ and $v$ in $\pi'$, and $\tau_{u^{*}v^{*}}$ for that in $\pi^*$.

Unfolding Eq. (\ref{eq:cost_local_optimum}) yields
\begin{align}
    & C^{+}(\pi|\mathcal{G}^{\text{in}}_i) + C^{+}(\pi|\mathcal{G}^{\text{ex}}_i) +
C^{+}(\pi|\mathcal{G}^{\text{cr}}_i) + C_0(\pi) \notag\\
    \leq \
    & C^{+}(\pi'|\mathcal{G}^{\text{in}}_i) + C^{+}(\pi'|\mathcal{G}^{\text{ex}}_i) +
C^{+}(\pi'|\mathcal{G}^{\text{cr}}_i)  + C_0(\pi').
\end{align}

Substituting Eq. (\ref{eq:relation_internal}), (\ref{eq:relation_external}), and (\ref{eq:relation_cross}), we have
\begin{align}
    & C^{+}(\pi|\mathcal{G}^{\text{in}}_i) + C^{+}(\pi|\mathcal{G}^{\text{ex}}_i) +
C^{+}(\pi|\mathcal{G}^{\text{cr}}_i) + C_0(\pi) \notag\\
    \leq \
    & C^{+}(\pi^*|\mathcal{G}^{\text{in}}_i) + C^{+}(\pi|\mathcal{G}^{\text{ex}}_i) +
\lambda C^{+}(\pi^*|\mathcal{G}^{\text{cr}}_i)  + C_0(\pi'). 
\end{align}

Eliminating redundant items and summarizing all edge servers for the above inequality, we obtain
\begin{align}
    & \sum_{i \in \mathcal{D}} \left[ C^{+}(\pi|\mathcal{G}^{\text{in}}_i) + 
C^{+}(\pi|\mathcal{G}^{\text{cr}}_i) \right]+ C_0(\pi) \notag\\
    \leq \
    & \sum_{i \in \mathcal{D}} \left[ C^{+}(\pi^*|\mathcal{G}^{\text{in}}_i) + \lambda
C^{+}(\pi^*|\mathcal{G}^{\text{cr}}_i) \right] + C_0(\pi'). \label{eq:sum_local_optimum}
\end{align}

For the summarized items in Eq. (\ref{eq:sum_local_optimum}), we have
\begin{align}
    \sum_{i \in \mathcal{D}} C^{+}(\pi|\mathcal{G}^{\text{in}}_i) &= C_1(\pi) + \sum_{i \in \mathcal{D}} C_2(\pi|\mathcal{G}^{\text{in}}_i), \\
    \sum_{i \in \mathcal{D}} C^{+}(\pi|\mathcal{G}^{\text{cr}}_i) &= 2 \sum_{i \in \mathcal{D}} C_2(\pi|\mathcal{G}^{\text{cr}}_i), \label{eq:unfold_pi_cr} \\
    \sum_{i \in \mathcal{D}} C^{+}(\pi^*|\mathcal{G}^{\text{in}}_i) &= C_1(\pi^*) + \sum_{i \in \mathcal{D}} C_2(\pi^*|\mathcal{G}^{\text{in}}_i),\\
    \sum_{i \in \mathcal{D}} C^{+}(\pi^*|\mathcal{G}^{\text{cr}}_i) &= 2 \sum_{i \in \mathcal{D}} C_2(\pi^*|\mathcal{G}^{\text{cr}}_i), \label{eq:unfold_pi*_cr}
\end{align}
where we remark that the right hand in Eq. (\ref{eq:unfold_pi_cr}) and Eq. (\ref{eq:unfold_pi*_cr}) is because every vertices pair $\langle u,v \rangle$ is calculated twice when iterating $i\in\mathcal{D}$. 
Therefore we rewrite Eq. (\ref{eq:sum_local_optimum}) in 
\begin{align}
    & C_1(\pi) + \sum_{i \in \mathcal{D}} C_2(\pi|\mathcal{G}^{\text{in}}_i) + 2 \sum_{i \in \mathcal{D}} C_2(\pi|\mathcal{G}^{\text{cr}}_i) + C_0(\pi) \notag \\
    \leq \
    & C_1(\pi^*) + \sum_{i \in \mathcal{D}} C_2(\pi^*|\mathcal{G}^{\text{in}}_i) + 2\lambda \sum_{i \in \mathcal{D}} C_2(\pi^*|\mathcal{G}^{\text{cr}}_i) + C_0(\pi'). \label{eq:unfold_local_optimum}
\end{align}

Notice that $C(\pi)$ and $C(\pi^*)$ can be alternatively represented as
\begin{align}
    &C(\pi) = C_2(\pi) + C_1(\pi) +  C_0(\pi) \notag \\
    &= \sum_{i \in \mathcal{D}} C_2(\pi|\mathcal{G}^{\text{in}}_i) + \sum_{i \in \mathcal{D}} C_2(\pi|\mathcal{G}^{\text{cr}}_i) + C_1(\pi) +  C_0(\pi), \label{eq:unfold_pi} \\
    &C(\pi^*) = C_2(\pi^*) + C_1(\pi^*) +  C_0(\pi^*) \notag \\
    &= \sum_{i \in \mathcal{D}} C_2(\pi^*|\mathcal{G}^{\text{in}}_i) + \sum_{i \in \mathcal{D}} C_2(\pi^*|\mathcal{G}^{\text{cr}}_i) + C_1(\pi^*) + C_0(\pi^*).   \label{eq:unfold_pi*}
\end{align}

Combining the above equations with Eq. (\ref{eq:unfold_local_optimum}), we have
\begin{align}
    & C(\pi) + \sum_{i \in \mathcal{D}} C_2(\pi|\mathcal{G}^{\text{cr}}_i) - C_0(\pi') \notag \\
    \leq \
    & C(\pi^*) + (2\lambda-1) \sum_{i \in \mathcal{D}} C_2(\pi^*|\mathcal{G}^{\text{cr}}_i) - C_0(\pi^*).
\end{align}

This subsequently leads to
\begin{align}
    C(\pi) 
    \leq \
    & C(\pi^*) + (2\lambda-1) \sum_{i \in \mathcal{D}} C_2(\pi^*|\mathcal{G}^{\text{cr}}_i) \notag \\
    & - \sum_{i \in \mathcal{D}}  C_2(\pi|\mathcal{G}^{\text{cr}}_i) + C_0(\pi') - C_0(\pi^*) \notag \\
    \leq \
    & 2 \lambda C(\pi^*) + C_0(\pi'), \notag \\
    = \
    & 2 \lambda C(\pi^*) + \epsilon.
\end{align}

Theorem \ref{thm:approximation} thereafter holds.
\end{proof}

\section{Proof of Theorem \ref{thm:convergence}}

\begin{proof}
Let $C[\varphi]$ be the obtained system cost by GLAD-S in iteration $\varphi$.
It immediately has $C[\varphi+1] < C[\varphi]$ since GLAD-S continuously seeks a total cost as minimum as possible; otherwise GLAD-S terminates.
Therefore, for all $\xi>0$, there exists a $\varphi$ such that $|C[\varphi]-C'|<\xi$, where $C'$ is within the range of $\left[ C(\pi^*), 2\lambda C(\pi^*) + \epsilon \right]$ when $R = \frac{|\mathcal{D}|(|\mathcal{D}|-1)}{2}$, as given by Theorem \ref{thm:approximation}.
The convergence is thus guaranteed.
\end{proof}

\section{Proof of Theorem \ref{thm:time_complexity}}

\begin{proof}
Given the convergence guaranteed by Theorem \ref{thm:convergence}, we first intend to estimate the upper bound of iteration times $\Phi$.
To accomplish that, we first focus on the number of iterations that successfully updates the graph layout, denoted by $\Phi^{\text{suss}}$.
Assuming $\Delta C$ as the average cost reduction of a successful iteration, we have
\begin{align}
    \Delta C = \frac{C(\pi^0) - C(\pi^*)}{\Phi^{\text{suss}}}, \label{eq:delta_C}
\end{align}
where $C(\pi^0)$ is the system cost of an initial graph layout $\pi^0$ and  $C(\pi^*)$ is the system cost of the converged layout $\pi^*$.
Notice that $\Delta C > 0$ always holds, unless the algorithm cannot decrease system cost anymore and will terminate.
More preciously, we have $\Delta C \geq \delta$, where $\delta$ is the minimum attainable cost reduction induced by the minimum change of an update, \textit{i.e.} switching a vertex from edge $i$ to edge $j$ in a \textit{s-t} cut attempt.
Substituting Eq. (\ref{eq:delta_C}) to the above inequality yields
\begin{align}
    \Phi^{\text{suss}} \leq \frac{C(\pi^0) - C(\pi^*)}{\delta}. \label{eq:phi_with_delta}
\end{align}
In the worst case, the initial layout $\pi^0$ places every vertex to the edge servers that incur the maximum cost and every link between them stretches over different edge servers, which results in a bound of $O(V+E)$ for $C(\pi^0)$.
Provided that $C(\pi^*)>0$, we can induce the following relations from Eq. (\ref{eq:phi_with_delta}):
\begin{align}
    \Phi^{\text{suss}} < \frac{C(\pi^0)}{\delta} \leq \frac{O(V+E)}{\delta}.
\end{align}
Since $\delta$ is constant when parameters $C$, $\pi$, and $\mathcal{G}$ are determined, the upper bound of $\Phi^{\text{suss}}$ is $O(V+E)$.
We next extend $\Phi^{\text{suss}}$ to $\Phi$ by accommodating iterations that fail to update the graph layout.
At worst, the intervals between every two successful iterations always take the most checking times $R$, resulting in an upper bound of $O(R)$.
Therefore, the upper bound of iteration times $\Phi$ is $O((V+E)R)$.

We next analyze the time complexity of each iteration.
Specifically, as we solve the graph cuts using the state-of-the-art max-flow algorithm \cite{Orlin2013max} for implementation, it introduces a time complexity linear to the product of the numbers of involved vertices and links.
In GLAD-S, each $s-t$ cut constructs an auxiliary graph $\mathcal{A}(i,j)$ with maximum vertices number $V+2$ and links number $E+2V$, its solution hence requires $O((V+2)(E+2V))$.
Upon an obtained cut set, to map them into a graph layout $\pi'$ needs to traverse all cuts, with a number linear to the number of vertices $\mathcal{A}(i,j)$, which induces $O(V)$.
Calculating the system cost with a given graph layout is considered a constant time complexity and is omitted.
Stacking the above results yields the time complexity of a single iteration $O((V+2)(E+2V))$, and putting all iterations together produces the time complexity of GLAD-S as $O((V+2)(E+2V)(V+E)R)$.
\end{proof}

\bibliographystyle{IEEEtran}
\bibliography{main.bib}

\end{document}